\makeatletter\@addtoreset {equation}{section}\makeatother
\newtheorem{proposition}{Proposition}
\newtheorem{definition}{Definition}
\newtheorem{remark}{Remark}
\begin{document}

\title[Modulational instability of periodic standing waves]{\bf Modulational instability of periodic standing waves \\
in the derivative NLS equation}

\author{Jinbing Chen}
\address[J. Chen]{School of Mathematics, Southeast University, Nanjing, Jiangsu 210096, P.R. China} 
\email{cjb@seu.edu.cn}
	
\author{Dmitry E. Pelinovsky}
\address[D.E. Pelinovsky, Corresponding author]{
	Department of Mathematics, McMaster University, Hamilton, Ontario, Canada, L8S 4K1 } 
\email{dmpeli@math.mcmaster.ca}

\author{Jeremy Upsal}
\address[J. Upsal]{Department of Applied Mathematics,	University of Washington,
Seattle, WA 98195, USA} 
\email{jupsal@uw.edu}

\keywords{derivative nonlinear Schr\"{o}dinger equation, periodic standing waves, Kaup-Newell spectral problem, algebraic method, eigenvalues, spectral stability, modulational stability}

\begin{abstract}
	We consider the periodic standing waves in the derivative nonlinear Schr\"{o}dinger (DNLS) equation arising in plasma physics. By using a newly developed algebraic method with two eigenvalues, we classify all periodic standing waves in terms of eight eigenvalues of the Kaup--Newell spectral problem located at the end points of the spectral bands outside the real line. The analytical work is complemented with the numerical approximation of the spectral bands, this enables us to fully characterize the modulational instability of the periodic standing waves in the DNLS equation.
\end{abstract}

\date{\today}
\maketitle

\section{Introduction}

The derivative nonlinear Schr\"{o}dinger (DNLS) equation arises in a long-wave,
weakly nonlinear limit from the one-dimensional compressible magnetohydrodynamic
equations in the presence of the Hall effect \cite{Mio,M}. This equation is a canonical model for Alfv\'{e}n waves propagating along the constant magnetic field in cold plasmas. It was shown by D. Kaup and A. Newell in \cite{KN-1978} that this equation has the same isospectral property as in the canonical Korteweg-de Vries (KdV) equation considered by P. Lax in \cite{Lax-1968}. For future reference, we take the DNLS equation in the following normalized form
\begin{equation}
i u_t + u_{xx} + i (|u|^2 u)_x = 0,
\label{dnls}
\end{equation}
where $i = \sqrt{-1}$ and $u(x,t) : \mathbb{R}\times \mathbb{R}\mapsto \mathbb{C}$. The DNLS equation is the compatibility condition for the following Lax pair of linear equations:
\begin{equation}\label{lax-1-intro}
\varphi_x = \left(\begin{array}{cc}
-i \lambda^2 & \lambda u\\
-\lambda \bar{u} & i \lambda^2\\
\end{array}
\right) \varphi,
\end{equation}
and
\begin{equation}\label{lax-2-intro}
\varphi_t = \left(\begin{array}{cc}
-2 i \lambda^4 + i \lambda^2 |u|^2 & 2 \lambda^3 u + \lambda (i u_x - |u|^2 u)\\
-2 \lambda^3 \bar{u} + \lambda (i \bar{u}_x + |u|^2 \bar{u}) & 2 i \lambda^4 - i \lambda^2 |u|^2\\
\end{array}
\right) \varphi,
\end{equation}
where $\bar{u}$ denotes the complex-conjugate of $u$ and $\varphi(x,t) : \mathbb{R}\times \mathbb{R} \mapsto \mathbb{C}^2$. The $x$-derivative part (\ref{lax-1-intro}) of the Lax pair is referred to as the Kaup--Newell (KN) spectral problem.

When the DNLS equation is posed on the real line, the Cauchy problem is locally well-posed in $H^s(\mathbb{R})$ for $s \geq \frac{1}{2}$ \cite{Takaoka} 
and ill-posed in $H^s(\mathbb{R})$ for $s < \frac{1}{2}$ due to lack of the continuous dependence on initial data \cite{BL}. If functional-analytic methods are used, global well-posedness of the Cauchy problem can only be shown for $u \in H^s(\mathbb{R})$, $s \geq \frac{1}{2}$ with small initial data in $L^2(\mathbb{R})$ (see \cite{H-O-1,H-O-2,M-W-X} and more recent works \cite{Fukaya,Wu1,Wu2}). On the other hand, by using tools of the inverse scattering transform, one can solve the Cauchy problem globally in a subspace of $H^2(\mathbb{R})$ without restricting the $L^2(\mathbb{R})$ norm of the initial data \cite{Perry2,Perry1,Yusuke1,Yusuke2}.

Travelling solitary waves of the DNLS equation are well known due to their important applications in plasma physics \cite{KN-1978,Mio,M}. These solutions can be expressed as the standing wave
\begin{equation}
\label{standing-wave}
u(x,t) = \phi_{\omega,\nu}(x-\nu t) e^{i \omega t},
\end{equation}
where $\phi_{\omega,\nu}$ is available in the polar form
\begin{equation}
\label{polar-form}
\phi_{\omega,\nu}(x) = R_{\omega,\nu}(x) e^{i \Theta_{\omega,\nu}(x)},
\end{equation}
with 
\begin{equation}
\label{soliton}
R_{\omega,\nu}(x) = \left( \frac{2 (4 \omega - \nu^2)}{
		\sqrt{4 \omega} \; \cosh(\sqrt{4\omega - \nu^2} x) - \nu} \right)^{1/2}, 
	\quad
	\Theta_{\omega,\nu}(x) = \frac{\nu}{2} x - \frac{3}{4} \int_{-\infty}^x R_{\omega,\nu}(y)^2 dy. 
\end{equation}
The speed parameter $\nu$ is arbitrary, whereas the frequency parameter $\omega$ is restricted under the constraint $4 \omega - \nu^2 > 0$. 
Orbital stability of the travelling waves in the energy space $H^1(\mathbb{R})$ was proven for $\nu < 0$ \cite{GuoWu} and for arbitrary $\nu \in (-\sqrt{4\omega},\sqrt{4 \omega})$ \cite{CO06} (see also recent works \cite{Kwon,MiaoTangXu}).

There are very few results available on the periodic standing wave solutions, which can be expressed in the form (\ref{standing-wave}) with 
\begin{equation}
\label{periodic-wave}
|\phi_{\omega,\nu}(x+L)| = |\phi_{\omega,\nu}(x)|, \quad x \in \mathbb{R},
\end{equation} 
for some fundamental period $L > 0$. The function $\phi_{\omega,\nu}$ is generally quasi-periodic in $x$ as it is expressed 
in the polar form (\ref{polar-form}), where $R_{\omega,\nu}$ and $\Theta_{\omega,\nu}'$ are periodic in $x$.

The simplest periodic standing wave solutions to the DNLS equation (\ref{dnls}) were analyzed directly in \cite{Chow} 
by separating the variables in the polar form (\ref{polar-form}).
Convergence of periodic waves to the solitary waves (\ref{soliton}) 
was shown in \cite{Hayashi}. Spectral stability of periodic waves with non-vanishing $\phi_{\omega,c}$ was established with respect to perturbations of the same period in \cite{Hakaev}.

{\em The main purpose of this work} is to classify all periodic standing waves of the DNLS equation in the form (\ref{standing-wave}) and (\ref{polar-form})
and to characterize their spectral stability with respect to localized 
perturbations. We use an algebraic method which allows us to relate the periodic standing waves with solutions of the complex finite-dimensional Hamiltonian systems. 

The algebraic method of nonlinearization of linear equations in the Lax pair to finite-dimensional Hamiltonian systems was developed by C.W. Cao and X.G. Geng in the context of the KdV equation \cite{Cao1}. The finite-dimensional Hamiltonian systems were obtained for the DNLS equation (\ref{dnls}) in \cite{Cao2,Ma2,Qiao,Zhou}. Quasi-periodic (algebro-geometric) solutions to the DNLS equation have been analyzed using the complex  finite-dimensional Hamiltonian systems in \cite{Chen} (see also \cite{Geng1,WrightDNLS,ZhaoFan} for other studies of quasi-periodic solutions in the DNLS equation). 

In the context of the periodic standing waves, the algebraic method gives the location of particular eigenvalues of the KN spectral problem which correspond to bounded periodic eigenfunctions (see \cite{CPkdv,CPnls,CPgardner,PW} for analysis of other integrable equations). These particular eigenvalues play an important role in the study of modulational stability of the periodic standing waves \cite{Kam1,Kam2}, e.g., in the propagation of dispersive shocks from an initial step-like discontinuity \cite{Kam3}. If the Floquet spectrum is obtained in the KN spectral problem analytically or numerically, then the Floquet spectrum in the linearized DNLS equation is obtained by a simple transformation (see \cite{CPW,DS} for an application of this technique to the NLS equation). 
The Floquet spectrum here refers to the union of all admissible values of the spectral parameter, for which the corresponding eigenfunctions are bounded. 

The relation between the Floquet spectrum of the KN spectral problem 
and the modulational stability of the periodic waves in the DNLS equation and other related equations  was used in the analysis \cite{DU2}, some results 
of which will be made more precise here. For numerical computations, we use the Hill's method developed in \cite{FFHM} and rigorously justified in \cite{Curtis,JohnZumb}. 

We shall now explain {\em the main results of this work.}

\begin{enumerate}
\item The periodic standing waves of the DNLS equation are derived by using the algebraic method with two eigenvalues. This method starts with a constraint imposed on solutions $u$ of the DNLS equation (\ref{dnls}) and solutions $\varphi$ of the linear equations (\ref{lax-1-intro}) and (\ref{lax-2-intro}) for two fixed values of the spectral parameter $\lambda$, all of which are apriori unknown. It is then shown that the constraint is compatible with the DNLS equation (\ref{dnls}) if and only if $u$ is a standing wave in the form (\ref{standing-wave}) and the two fixed values of $\lambda$ are found from roots of the polynomial $P(\lambda)$ of degree eight. Parameters of the polynomial $P(\lambda)$ are uniquely selected from parameters of the standing wave solutions in the form (\ref{standing-wave}). \\

\item When $u$ is the standing periodic waves of the DNLS equation (\ref{dnls}) 
in the form (\ref{standing-wave}), eight roots of the polynomial $P(\lambda)$ are related to eight eigenvalues of the KN spectral problem (\ref{lax-1-intro}) in the space of periodic or anti-periodic boundary conditions (the squared eigenfunctions are $L$-periodic). We prove that these eigenvalues could form either four pairs of purely imaginary eigenvalues or two quadruplets of four eigenvalues in four quadrants of the complex plane or the mixture of both (two pairs of purely imaginary eigenvalues and one quadruplet of complex eigenvalues). \\

\item Performing numerical computations of the Floquet spectrum 
for the KN spectral problem (\ref{lax-1-intro}), we show that 
the spectral bands connecting the eight eigenvalues determine uniquely 
the modulational stability or instability of the periodic standing waves. 
All periodic waves are spectrally (and modulationally) unstable in the case of one or two quadruplets of complex eigenvalues, whereas they are spectrally (and modulationally) stable in the case of four pairs of purely imaginary eigenvalues. 
\end{enumerate}

The paper is written as a developing argument. The most important result 
obtained by using the algebraic method assisted with the numerical computations 
of the Floquet spectrum is the precise characterization of the spectrally stable periodic standing waves:

\vspace{0.2cm}

\centerline{\fbox{\parbox[cs]{1,0\textwidth}{
The periodic standing waves of the DNLS equation are spectrally stable if and only if the eight roots of the polynomial $P(\lambda)$ are located on the imaginary axis.
}}}

\vspace{0.2cm}

The spectrally stable periodic standing waves of the form (\ref{standing-wave}) correspond to a subset of the parameter space with 
\begin{equation}
\label{stab-wave}
4 \omega < \nu^2, \quad \omega > 0, \quad \nu > 0.
\end{equation}
Periodic standing waves for every other choices in the parameter space are 
spectrally (and modulationally) unstable.

Our results are only applicable to the periodic standing waves. 
The solitary waves of the form (\ref{soliton}) 
exist only if $4 \omega > \nu^2$, where they are stable. 
Spectral bands of the Floquet spectrum for the periodic standing waves  
outside the real and imaginary axis shrink to the quadruplet of complex 
eigenvalues as the period $L$ becomes infinite. As a result, the 
spectrally unstable periodic waves converge to the spectrally 
(and orbitally) stable solitary waves, very similarly to 
the NLS equation, where periodic standing waves are spectrally unstable 
\cite{DS} and the solitary waves are spectrally (and orbitally) stable \cite{Weinstein}.

The paper is organized as follows. Properties of eigenvalues 
of the KN spectral problem are reviewed in Section \ref{sec:1}. Section \ref{sec:2} describes the algebraic method with two eigenvalues. Eight roots of the polynomial $P(\lambda)$ characterize parameters of the periodic standing waves and coincide with the eigenvalues of the KN spectral problem. The connection between the eight eigenvalues, the Floquet spectrum, and the stability spectrum is described in Section \ref{sec:7}. Section \ref{sec:3} describes all possible periodic standing waves and relates them to the location of eight roots of the polynomial $P(\lambda)$.  Numerical results on the Floquet spectrum in the KN spectral problem and the spectral stability problem are given in Section \ref{sec:4} for the physically relevant family of the periodic standing waves. The paper is concluded with Section \ref{sec:5} where we describe open directions of this study.

\section{Properties of eigenvalues for the DNLS equation}
\label{sec:1}

The following definition of eigenvalues is used in what follows.

\begin{definition}
	\label{def-eigenvalue}
	Assume that $u(x) = R(x) e^{i\Theta(x)}$ with $L$-periodic $R$ and $\Theta'$. Then $\lambda$ is called an eigenvalue of the KN spectral 
	problem (\ref{lax-1-intro}) w.r.t. periodic (anti-periodic) boundary conditions if a nonzero eigenvector $\varphi = (p,q)^T$ is given by $p(x) = P(x) e^{i\Theta(x)/2}$ and $q(x) = Q(x) e^{-i\Theta(x)/2}$ with $L$-periodic ($L$-anti-periodic) $P$ and $Q$.
\end{definition}

\begin{remark}
	If $P^2$ and $Q^2$ are $L$-periodic, then $P$ and $Q$ could be either $L$-periodic or $L$-anti-periodic.
\end{remark}

The following two propositions describe two symmetries of eigenvalues in the spectral problem (\ref{lax-1-intro}) related to the DNLS equation (\ref{dnls}).

\begin{proposition}
	\label{prop-1}
	Assume $\lambda \in \mathbb{C} \backslash \mathbb{R}$
	is an eigenvalue of the spectral problem (\ref{lax-1-intro})
	with the eigenvector $\varphi = (p,q)^T$. Then,
	$\bar{\lambda}$ is also an eigenvalue with the eigenvector
	$\varphi = (\bar{q},-\bar{p})^T$. If $\lambda \in \mathbb{R}\backslash \{0\}$ is an eigenvalue,
	then it is at least double with two eigenvectors $\varphi = (p,q)^T$
	and $\varphi = (\bar{q},-\bar{p})^T$.
\end{proposition}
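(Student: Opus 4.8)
The plan is to verify directly that if $\varphi = (p,q)^T$ solves the KN spectral problem \eqref{lax-1-intro} at $\lambda$, then the proposed vector $\psi = (\bar q, -\bar p)^T$ solves it at $\bar\lambda$. The key observation driving this is that the potential matrix has a specific conjugation symmetry: the coefficient matrix at $\bar\lambda$ is related to the complex conjugate of the coefficient matrix at $\lambda$ after a similarity transformation by $\sigma = \begin{pmatrix} 0 & 1 \\ -1 & 0 \end{pmatrix}$. Concretely, I would first write out the two scalar equations from \eqref{lax-1-intro}:
\begin{equation*}
p_x = -i\lambda^2 p + \lambda u q, \qquad q_x = -\lambda \bar u\, p + i \lambda^2 q.
\end{equation*}

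**Next I would** take the complex conjugate of each equation. Conjugating the first gives $\bar p_x = i \bar\lambda^2 \bar p + \bar\lambda \bar u\, \bar q$, and conjugating the second gives $\bar q_x = -\bar\lambda u\, \bar p - i \bar\lambda^2 \bar q$. I then set $P := \bar q$ and $Q := -\bar p$ and check that the pair $(P,Q)$ satisfies the KN system at spectral parameter $\bar\lambda$, i.e. that $P_x = -i\bar\lambda^2 P + \bar\lambda u Q$ and $Q_x = -\bar\lambda \bar u\, P + i \bar\lambda^2 Q$. Substituting $P = \bar q$, $Q = -\bar p$ turns these into the two conjugated equations above after rearrangement, so the verification is a short algebraic match. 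This establishes that $\bar\lambda$ is an eigenvalue with eigenvector $(\bar q, -\bar p)^T$.

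**The remaining point** is the claim about the boundary conditions, which is what makes $\bar\lambda$ an \emph{eigenvalue} in the sense of Definition \ref{def-eigenvalue} rather than merely a spectral parameter admitting a bounded solution. Here I would use that $\varphi = (p,q)^T$ has the Bloch form $p = P(x)e^{i\Theta(x)/2}$, $q = Q(x)e^{-i\Theta(x)/2}$ with $P,Q$ periodic (or anti-periodic). Then $\bar q = \bar Q(x) e^{i\Theta(x)/2}$ and $-\bar p = -\bar P(x) e^{-i\Theta(x)/2}$, which is exactly of the form required in Definition \ref{def-eigenvalue} with periodic (resp. anti-periodic) factors $\bar Q$ and $-\bar P$; this relies on $\Theta$ being real so that $\overline{e^{-i\Theta/2}} = e^{i\Theta/2}$. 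Hence the new eigenvector respects the same (anti-)periodic boundary conditions, confirming $\bar\lambda$ is genuinely an eigenvalue.

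**For the second assertion**, when $\lambda \in \mathbb{R}\setminus\{0\}$ we have $\bar\lambda = \lambda$, so both $(p,q)^T$ and $(\bar q,-\bar p)^T$ are eigenvectors at the \emph{same} eigenvalue $\lambda$. The task is then to show these two eigenvectors are linearly independent, giving geometric multiplicity at least two. The natural way is to compute their Wronskian-type determinant $\det\begin{pmatrix} p & \bar q \\ q & -\bar p\end{pmatrix} = -|p|^2 - |q|^2$, which is strictly negative unless $\varphi \equiv 0$; since $\varphi$ is a nonzero eigenvector this determinant never vanishes, so the two vectors are independent at every $x$. \textbf{The main obstacle} I anticipate is not the algebra of the symmetry, which is routine, but ensuring the boundary-condition bookkeeping is handled correctly — specifically checking that the reality of $\Theta$ and the periodicity of $P,Q$ propagate to $\bar Q, -\bar P$ under conjugation, and that in the real-$\lambda$ case the nonvanishing of $|p|^2 + |q|^2$ is justified (which follows from uniqueness for the first-order linear ODE system, since a zero of $\varphi$ at one point would force $\varphi \equiv 0$).
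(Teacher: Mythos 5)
Your proposal is correct and follows essentially the same route as the paper: conjugate the two scalar equations of (\ref{lax-1-intro}), observe that $(\bar q,-\bar p)^T$ solves the system at $\bar\lambda$ with the (anti-)periodicity of Definition \ref{def-eigenvalue} preserved, and conclude double multiplicity when $\lambda\in\mathbb{R}\setminus\{0\}$. The only cosmetic difference is in the independence step, where you evaluate the determinant $-(|p|^2+|q|^2)\neq 0$ while the paper derives the contradiction $|c|^2=-1$ from assuming proportionality; these are equivalent elementary arguments.
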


\begin{proof}
	If $\varphi = (p,q)^T$ satisfies (\ref{lax-1-intro}), then 
	\begin{eqnarray}
	p_x = -i \lambda^2 p + \lambda u q, \qquad 
	q_x = -\lambda \bar{u} p + i \lambda^2 q.
	\end{eqnarray}
	Taking the complex-conjugate equation, we verify that $\varphi = (\bar{q},-\bar{p})^T$ satisfies the same equation (\ref{lax-1-intro}) but with $\lambda$ replaced by $\bar{\lambda}$. Since the periodicity properties for $\varphi = (p,q)^T$ and $\varphi = (\bar{q},-\bar{p})^T$ are the same as in Definition \ref{def-eigenvalue}, if $\lambda$ is an eigenvalue, then $\bar{\lambda}$ is an eigenvalue. 
	
	If $\lambda \in \mathbb{C}\backslash \mathbb{R}$, then $\bar{\lambda} \neq \lambda$. If $\lambda \in \mathbb{R}$, then $\bar{\lambda} = \lambda$ but $\varphi = (\bar{q},-\bar{p})^T$ is linearly independent from $\varphi = (p,q)^T$, because if there is a nonzero constant $c \in \mathbb{C}$ such that $\bar{q} = c p$ and $-\bar{p} = c q$, then $|c|^2 = -1$, a contradiction. Hence, $\lambda$ is at least a double eigenvalue.
\end{proof}

\begin{proposition}
	\label{prop-2}
	Assume $\lambda \in i \mathbb{R} \backslash \{0\}$
	is a simple eigenvalue of the spectral problem  (\ref{lax-1-intro})
	with the eigenvector $\varphi = (p,q)^T$. Then there is $c \in \mathbb{C}$ with $|c| = 1$ such that $p = c \bar{q}$.
\end{proposition}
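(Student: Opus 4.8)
The plan is to exploit two symmetries of the KN spectral problem (\ref{lax-1-intro}) and then invoke the simplicity of $\lambda$. Writing $\lambda = i\beta$ with $\beta \in \mathbb{R}\setminus\{0\}$, the key observation is that $\bar{\lambda} = -\lambda$, so any construction producing an eigenvector for $\bar{\lambda}$ and any construction producing one for $-\lambda$ in fact produce eigenvectors for one and the same eigenvalue. This coincidence is special to the imaginary axis and is what makes the self-conjugacy relation $p = c\bar{q}$ possible.

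First I would record the conjugation symmetry already supplied by Proposition \ref{prop-1}: the vector $(\bar{q}, -\bar{p})^T$ is an eigenvector for $\bar{\lambda} = -\lambda$, and, as verified in the proof of that proposition, it obeys the periodicity structure required by Definition \ref{def-eigenvalue}. Next I would establish the reflection symmetry $\lambda \mapsto -\lambda$. A direct substitution into (\ref{lax-1-intro}) shows that if $(p,q)^T$ solves the system at $\lambda$, then $(p,-q)^T$ solves it at $-\lambda$, since $\lambda^2$ is unchanged while the off-diagonal entries $\lambda u$ and $-\lambda \bar{u}$ change sign. Because $p = P e^{i\Theta/2}$ and $-q = -Q e^{-i\Theta/2}$ retain the form prescribed by Definition \ref{def-eigenvalue} (with $Q$ replaced by $-Q$, still $L$-periodic or $L$-anti-periodic), the vector $(p,-q)^T$ is a genuine eigenvector for $-\lambda = \bar{\lambda}$.

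At this point both $(p,-q)^T$ and $(\bar{q}, -\bar{p})^T$ are eigenvectors associated with the single eigenvalue $\bar{\lambda} = -\lambda$. Since the conjugation map of Proposition \ref{prop-1} is a conjugate-linear invertible map between the eigenspaces of $\lambda$ and $\bar{\lambda}$ (its square equals $-\mathrm{id}$), simplicity of $\lambda$ forces $\bar{\lambda}$ to be simple as well, so its eigenspace is one-dimensional. Consequently the two eigenvectors are proportional, $(p,-q)^T = c\,(\bar{q}, -\bar{p})^T$ for some $c \neq 0$, which yields the scalar relations $p = c\bar{q}$ and $q = c\bar{p}$. Conjugating the second gives $\bar{q} = \bar{c}\,p$, and substituting into the first gives $p = c\bar{q} = |c|^2 p$. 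Since $p \not\equiv 0$ (otherwise $q = c\bar{p} \equiv 0$ as well, contradicting that the eigenvector is nonzero), we conclude $|c| = 1$, and $p = c\bar{q}$ is the desired relation.

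The two substitutions are routine; the point requiring care is the bookkeeping of Definition \ref{def-eigenvalue}, namely confirming that both transformed vectors still satisfy the prescribed quasi-periodic structure so that they genuinely count as eigenvectors, together with the transfer of simplicity from $\lambda$ to $\bar{\lambda}$, which is precisely what legitimizes the proportionality step and hence the constraint $|c|=1$.
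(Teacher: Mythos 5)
Your proof is correct, but it is organized differently from the paper's. The paper proves the statement in one step: after substituting $\lambda = i\beta$ into (\ref{lax-1-intro}) and conjugating the resulting system, it observes that $(\bar{q},\bar{p})^T$ solves the \emph{same} spectral problem at the \emph{same} eigenvalue $i\beta$, so simplicity of $\lambda$ immediately gives $\bar{q}=cp$, $\bar{p}=cq$ and hence $|c|=1$. You instead factor this single symmetry into two: the conjugation map of Proposition \ref{prop-1}, sending $(p,q)^T$ to an eigenvector $(\bar{q},-\bar{p})^T$ at $\bar{\lambda}$, and the reflection symmetry $(p,q)^T \mapsto (p,-q)^T$ at $-\lambda$, exploiting that $\bar{\lambda}=-\lambda$ precisely on the imaginary axis; composing your two maps recovers exactly the paper's map $(p,q)^T \mapsto (\bar{q},\bar{p})^T$. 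The price of your route is the extra lemma that simplicity transfers from $\lambda$ to $\bar{\lambda}$, which you correctly justify via the conjugate-linear map with square $-\mathrm{id}$ (a conjugate-linear bijection preserves the dimension of the eigenspaces); the paper avoids this entirely by never leaving the eigenvalue $\lambda$. What your version buys is conceptual clarity: it isolates the reflection symmetry of the KN pencil (eigenvalues come in $\pm$ pairs, a fact the paper uses implicitly elsewhere) and makes explicit \emph{why} the self-conjugacy relation $p=c\bar{q}$ is special to $\lambda \in i\mathbb{R}$, namely the coincidence of the two symmetry images at one point of the spectrum. Both proofs end with the same algebra: proportionality of two eigenvectors in a one-dimensional eigenspace, conjugation of one scalar relation, and $p=|c|^2 p$ with $p \not\equiv 0$ forcing $|c|=1$.
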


\begin{proof}
	If $\varphi = (p,q)^T$ satisfies (\ref{lax-1-intro}) with $\lambda = i \beta$, $\beta \in \mathbb{R}$, then 
	\begin{eqnarray}
	p_x = i \beta^2 p + i \beta u q, \qquad 
	q_x = -i \beta \bar{u} p - i \beta^2 q.
	\end{eqnarray}
	Taking the complex-conjugate equation, we verify that $\varphi = (\bar{q},\bar{p})^T$ satisfies the same equation (\ref{lax-1-intro}) with the same $\lambda = i \beta$. Since $\lambda = i \beta$ is a simple eigenvalue, then $\varphi = (\bar{q},\bar{p})^T$ is linearly dependent on $\varphi = (p,q)^T$, so that there is a nonzero constant $c \in \mathbb{C}$ such that $\bar{q} = c p$ and $\bar{p} = c q$. The two relations yield the constraint $|c|^2 = 1$.
\end{proof}

\begin{remark}
	The symmetry of eigenvalues and eigenvectors in Propositions \ref{prop-1} and \ref{prop-2} holds for the second Lax equation (\ref{lax-2-intro}).
\end{remark}

\section{Algebraic method with two eigenvalues}
\label{sec:2}

In order to develop the algebraic method, it is natural to extend the DNLS equation (\ref{dnls}) as a reduction $v = \bar{u}$ of the following coupled system:
\begin{equation}
\left\{ \begin{array}{l}
i u_t + u_{xx} + i (u^2 v)_x = 0,  \\
-i v_t + v_{xx} - i (u v^2)_x = 0.
\end{array} \right.
\label{dnls-coupled}
\end{equation}
The coupled DNLS system (\ref{dnls-coupled}) appears as a compatibility condition of the Lax pair of linear equations on $\varphi \in \mathbb{C}^2$ given by
\begin{equation}\label{lax-1}
\varphi_x = U \varphi,\quad
U = \left(\begin{array}{cc}
-i \lambda^2 & \lambda u\\
-\lambda v & i \lambda^2\\
\end{array}
\right),
\end{equation}
and
\begin{equation}\label{lax-2}
\varphi_t = V \varphi, \quad
V = \left(\begin{array}{cc}
-2 i \lambda^4 + i \lambda^2 uv & 2 \lambda^3 u + \lambda (i u_x - u^2 v)\\
-2 \lambda^3 v + \lambda (i v_x + u v^2) & 2 i \lambda^4 - i \lambda^2 uv\\
\end{array}
\right).
\end{equation}
We only consider the KN spectral problem (\ref{lax-1}) and ignore the time-dependent equation (\ref{lax-2}) for now. As a result, we replace partial derivatives in $x$ with ordinary derivatives.

\begin{remark}
The time evolution of constraints in the algebraic method for the periodic standing waves is trivial (see, e.g., \cite{CPkdv,CPnls} for other integrable equations). The time evolution of the eigenvector $\varphi = (p,q)^T$ satisfying (\ref{lax-2}) is defined in (\ref{varphi-separation}) below.
\end{remark}

Fix two values $\lambda_1, \lambda_2 \in \mathbb{C}$. 
Let $\varphi = (p_1,q_1)^T$ be a particular solution to the spectral problem (\ref{lax-1}) for $\lambda = \lambda_1$ and 
$\varphi = (p_2,q_2)^T$ be a particular solution to the spectral problem (\ref{lax-1}) for $\lambda = \lambda_2$. The two solutions are required to be linearly independent if $\lambda_1 = \lambda_2$. As an ansatz, we set the following constraint between the potentials $(u,v)$ and the squared eigenfunctions:
\begin{equation}\label{2.1}
\left\{ \begin{array}{l}
u = \lambda_1 p_1^2 + \lambda_2 p_2^2, \\
v = \lambda_1 q_1^2 + \lambda_2 q_2^2. \end{array}
\right.
\end{equation}

\begin{remark}
All the above are unknowns: $\lambda_1$, $\lambda_2$, $\varphi = (p_1,q_1)^T$, $\varphi = (p_2,q_2)^T$, $u$, and $v$. 
The purpose of the algebraic method is to identify 
the unknowns from the constraint (\ref{2.1}).
\end{remark}

With the constraints (\ref{2.1}), the spectral problem
(\ref{lax-1}) for $\lambda = \lambda_1$ and $\lambda = \lambda_2$ can be written as the complex Hamiltonian system
\begin{eqnarray*}
%\label{2.3}
\frac{d p_1}{dx} = -i \lambda_1^2 p_1 + \lambda_1 (\lambda_1 p_1^2 + \lambda_2 p_2^2) q_1 = -\frac{\partial H}{\partial q_1}, \;\; &&
\frac{d q_1}{dx} = i \lambda_1^2 q_1 - \lambda_1 (\lambda_1 q_1^2 + \lambda_2 q_2^2) p_1 = \frac{\partial H}{\partial p_1},\\
%\label{2.3a}
\frac{d p_2}{dx} = -i \lambda_2^2 p_2 + \lambda_2 (\lambda_1 p_1^2 + \lambda_2 p_2^2) q_2 = -\frac{\partial H}{\partial q_2}, \;\; &&
\frac{d q_2}{dx} = i \lambda_2^2 q_2 - \lambda_2 (\lambda_1 q_1^2 + \lambda_2 q_2^2) p_2 = \frac{\partial H}{\partial p_2},
\end{eqnarray*}
generated by the complex-valued Hamiltonian
\begin{equation}\label{2.4}
H= i \lambda_1^2 p_1q_1 + i \lambda_2^2 p_2 q_2 - \frac{1}{2} (\lambda_1p_1^2 + \lambda_2 p_2^2) (\lambda_1 q_1^2 + \lambda_2 q_2^2).
\end{equation}
The complex Hamiltonian system 
admits another complex conserved quantity $F$ given by
\begin{equation}\label{2.7}
F = i (p_1q_1 + p_2 q_2),
\end{equation}
where the normalization factor $i$ is used for convenience.

If $v = \bar{u}$, we need to restrict the eigenvalues $\lambda_1$ and $\lambda_2$
in order to ensure that the conserved quantities $H$ and $F$ are real-valued. This is done in agreement with the symmetries in Propositions \ref{prop-1} and \ref{prop-2}.

\begin{itemize}
\item Let $\lambda_1 \in \mathbb{C}\backslash i \mathbb{R}$ and set
$\lambda_2 = \bar{\lambda}_1$. By Proposition \ref{prop-1}, 
we may take
	\begin{equation}
	\label{reduction-2}
	p_2 = \bar{q}_1, \quad q_2 = -\bar{p}_1.
	\end{equation}
Under the choice (\ref{reduction-2}), 
the constraint (\ref{2.1}) becomes 
compatible with the complex-conjugate symmetry
\begin{equation}
\label{constraint-red-2}
\left\{ \begin{array}{l}
u = \lambda_1 p_1^2 + \bar{\lambda}_1 \bar{q}_1^2, \\
\bar{u} = \lambda_1 q_1^2 + \bar{\lambda}_1 \bar{p}_1^2, \end{array}
\right.
\end{equation}
whereas $H$ and $F$ in (\ref{2.4}) and (\ref{2.7}) become real-valued:
\begin{equation}
	\label{Hamiltonian-2}
	H = i (\lambda_1^2 p_1q_1- \bar{\lambda}_1^2 \bar{p}_1 \bar{q}_1) - \frac12 \left|\lambda_1p_1^2 + \bar{\lambda}_1 \bar{q}_1^2 \right|^2
\end{equation}
and
\begin{equation}
\label{conserved-2}
F= i (p_1q_1-\bar{p}_1 \bar{q}_1).
\end{equation}
	
\item Let $\lambda_1,\lambda_2 \in i \mathbb{R}$ such that 
$\lambda_1 \neq \pm \lambda_2$. By Proposition \ref{prop-2}, 
we may take
	\begin{equation}
	\label{reduction-1}
\lambda_1 = i \beta_1,	\quad q_1 = -i\bar{p}_1 \quad \mbox{\rm and} \quad
\lambda_2 = i \beta_2, \quad q_2 = -i\bar{p}_2.
		\end{equation}
		Under the choice (\ref{reduction-1}), the constraint (\ref{2.1}) becomes compatible with the complex-conjugate symmetry
\begin{equation}\label{constraint-red-1}
\left\{ \begin{array}{l}
u = i \beta_1 p_1^2 + i \beta_2 p_2^2, \\
\bar{u} =  -i \beta_1 \bar{p}_1^2 - i \beta_2 \bar{p}_2^2, \end{array}
\right.
\end{equation}
whereas $H$ and $F$ in (\ref{2.4}) and (\ref{2.7}) become real-valued:
	\begin{equation}
	\label{Hamiltonian-1}
	H =  -\beta_1^2 |p_1|^2 - \beta_2^2 |p_2|^2  - \frac12 \left| \beta_1 p_1^2 + \beta_2 p_2^2 \right|^2
	\end{equation}
	and
\begin{equation}
\label{conserved-1}
F= |p_1|^2 + |p_2|^2.
\end{equation}	
\end{itemize}

\vspace{0.25cm}

Let us now derive and integrate the differential equations
on $(u,v)$ from compatibility of the constraint (\ref{2.1})
with the Hamiltonian system generated by the Hamiltonian (\ref{2.4}). From now on, we use the complex-conjugate reduction $v = \bar{u}$ in all subsequent computations, hence, we only use the constraints (\ref{constraint-red-2}) and (\ref{constraint-red-1}).

\begin{proposition}
	\label{prop-tech-1}
If $u$ satisfies either the constraint (\ref{constraint-red-2}) if $\lambda_2 = \bar{\lambda}_1$ with ${\rm Re}(\lambda_1) \neq 0$ or 
the constraint (\ref{constraint-red-1}) if $\lambda_1, \lambda_2 \in i \mathbb{R}$ with $\lambda_1 \neq \pm \lambda_2$, then $u$ is a solution 
of the second-order differential equation
\begin{equation}\label{2.9}
\frac{d^2 u}{d x^2} + i \frac{d}{dx} (|u|^2u) + 2i c \frac{d u}{d x} - 4 b u = 0,
\end{equation}
with parameters
\begin{equation}\label{2.10}
b = \lambda_1^2 \lambda_2^2 (1 + F), \qquad
c = \lambda_1^2 + \lambda_2^2 + H.
\end{equation}
\end{proposition}

\begin{proof}
By taking one derivative of either (\ref{constraint-red-2}) or (\ref{constraint-red-1}) and using (\ref{2.4}), we obtain the first-order equation on $u$,
\begin{equation}\label{2.8}
\frac{d u}{d x} + i |u|^2u + 2i H u + 2i (\lambda_1^3 p_1^2 + \lambda_2^3 p_2^2) = 0.
\end{equation}
The first-order equation (\ref{2.8}) is not closed on $u$.
However, by taking another derivative of (\ref{2.8}),
using (\ref{2.4}), (\ref{2.7}), and (\ref{2.8}), we obtain the closed second-order differential equation (\ref{2.9}) with parameters given by 
(\ref{2.10}).
\end{proof}

\begin{remark}
The second-order equation (\ref{2.9}) arises in the standing wave reduction
of the DNLS equation (\ref{dnls}) for the solutions of the form 
\begin{equation}
\label{trav-wave}
u(x,t) = \tilde{u}(x+2ct)e^{4ibt},
\end{equation}
where $\tilde{u}$ satisfies (\ref{2.9}) with tilde notations dropped.
\end{remark}

It follows from \cite{Chen} that the complex Hamiltonian system generated by the Hamiltonian (\ref{2.4}) is equivalent to the Lax equation
\begin{equation}\label{2.11}
\frac{d}{dx} \Psi = [\mathcal{U},\Psi],
\end{equation}
where 
\begin{equation}
\label{matrix-U}
\mathcal{U} = \left(\begin{array}{cc}
	-i \lambda^2 & \lambda (\lambda_1 p_1^2 + \lambda_2 p_2^2) \\
	-\lambda (\lambda_1 q_1^2 + \lambda_2 q_2^2) & i \lambda^2\\
\end{array}
\right),
\end{equation}
and
\begin{equation}\label{2.12}
\Psi := \left(\begin{array}{cc}
\Psi_{11} & \Psi_{12} \\
\Psi_{21} & -\Psi_{11}
\end{array} \right)
\end{equation}
with 
\begin{eqnarray}
\Psi_{11} &=& - i - \frac{\lambda_1^2 p_1 q_1}{\lambda^2 - \lambda_1^2} -
\frac{\lambda_2^2 p_2 q_2} {\lambda^2- \lambda_2^2}, \label{2.13}\\
\Psi_{12} &=& \lambda\left(
\frac{\lambda_1 p_1^2}{\lambda^2 - \lambda_1^2} + \frac{\lambda_2 p_2^2} {\lambda^2-\lambda_2^2}
\right), \label{2.14}\\
\Psi_{21} &=& -\lambda\left(
\frac{\lambda_1 q_1^2}{\lambda^2 - \lambda_1^2} + \frac{\lambda_2 q_2^2} {\lambda^2-\lambda_2^2}
\right). \label{2.15}
\end{eqnarray}
Here $\lambda \in \mathbb{C}$ is arbitrary spectral parameter 
and $\lambda_1$, $\lambda_2$, $(p_1,q_1)$, and $(p_2,q_2)$ are the same as in the constraints (\ref{2.1}).
It follows from (\ref{2.4}), (\ref{2.7}), and (\ref{2.13})--(\ref{2.15}) that
\begin{equation}
\label{2.19}
\det \Psi = - \Psi_{11}^2 - \Psi_{12} \Psi_{21} = 1 -
\frac{2H \lambda^2- \lambda_1^2 \lambda_2^2 F (F+2)}
{(\lambda^2 - \lambda_1^2)(\lambda^2 - \lambda_2^2)}.
\end{equation}

\begin{remark}
	\label{remark-5}
It follows from (\ref{2.19}) that if $\lambda_1 \neq \pm \lambda_2$, then $\det \Psi$ only contains simple poles at $(\pm \lambda_1,\pm \lambda_2)$ with the residue terms being independent of $x$.
\end{remark}

\begin{proposition}
	\label{prop-tech-2}
	If $u$ satisfies either the constraint (\ref{constraint-red-2}) if $\lambda_2 = \bar{\lambda}_1$ with ${\rm Re}(\lambda_1) \neq 0$ or 
	the constraint (\ref{constraint-red-1}) if $\lambda_1, \lambda_2 \in i \mathbb{R}$ with $\lambda_1 \neq \pm \lambda_2$, then $\lambda_1$ and $\lambda_2$ must be chosen from the eight roots of the polynomial
	\begin{equation}
	\label{polynomial}
	P(\lambda) = \lambda^8 -2c \lambda^6 +  (a+2b+c^2) \lambda^4 + (d-c(a+2b)) \lambda^2 + b^2,
	\end{equation}
	where parameters $b$ and $c$ are given by (\ref{2.10}) and 
	parameters $a$ and $d$ are given by 
\begin{equation}\label{3.21}
a = \lambda_1^2 \lambda_2^2 F^2 - H^2,\qquad d= \lambda_1^2 \lambda_2^2 F H (F +2) - H^2(\lambda_1^2+ \lambda_2^2 + H).
\end{equation}
\end{proposition}

\begin{proof}
By using (\ref{2.1})--(\ref{2.7}), (\ref{2.10}), and (\ref{2.8}), 
the entries of the Lax matrix $\Psi$ can be rewritten in terms of $(u,\bar{u})$ by
\begin{eqnarray}
\label{2.16}
\Psi_{11} &=& \frac{-i}{(\lambda^2 - \lambda_1^2)(\lambda^2-\lambda_2^2)} \left[\lambda^4 - \lambda^2 \left(c+ \frac12\left|u\right|^2\right) +b
\right], \\
\label{2.17}
\Psi_{12} &=& \frac{\lambda}{(\lambda^2 - \lambda_1^2)(\lambda^2-\lambda_2^2)}
 \left[\lambda^2 u + \frac{i}{2} \frac{d u}{d x}-\frac{1}{2} u\left| u\right|^2 -c u \right], \\
 \Psi_{21} &=& \frac{-\lambda}{(\lambda^2 - \lambda_1^2)(\lambda^2-\lambda_2^2)}
 \left[\lambda^2 \bar{u} - \frac{i}{2} \frac{d \bar{u}}{d x}-\frac{1}{2} \bar{u} \left| u\right|^2 - c \bar{u} \right]. \label{2.18}
\end{eqnarray}
The $(1,2)$-component of the Lax equation (\ref{2.11}) with (\ref{2.16})--(\ref{2.18}) recovers the second-order equation (\ref{2.9}).
Two conserved quantities of the second-order equation (\ref{2.9}) follows 
from the alternative representation of $\det \Psi$ obtained from  (\ref{2.16})--(\ref{2.18}):
\begin{equation}\label{2.20}
\det \Psi = - \Psi_{11}^2 - \Psi_{12} \Psi_{21} = \frac{P(\lambda)}{(\lambda^2 - \lambda_1^2)^2(\lambda^2-\lambda_2^2)^2},
\end{equation}
where $P(\lambda)$ is the eight-degree polynomial given by
\begin{eqnarray}
P(\lambda) &=& \left[\lambda^4 - \lambda^2 \left(c+ \frac12\left|u\right|^2\right) + b \right]^2 \nonumber \\
&& + \lambda^2 \left[\lambda^2 u + \frac{i}{2} \frac{d u}{d x}-\frac{1}{2} u\left| u\right|^2 -c u \right]
\left[\lambda^2 \bar{u} - \frac{i}{2} \frac{d \bar{u}}{d x}-\frac{1}{2} \bar{u}\left| u\right|^2 -c \bar{u} \right].
\label{2.21}
\end{eqnarray}
It follows from (\ref{2.19}) that $\det \Psi$ is $x$ independent. Hence, 
the polynomial $P(\lambda)$ is $x$ independent. The coefficients of $P(\lambda)$ are $x$ independent if and only if 
solutions to the second-order equation (\ref{2.9}) also satisfy the following two first-order invariants
\begin{equation}\label{2.22}
2i \left( \bar{u} \frac{du}{dx} - u \frac{d \bar{u}}{dx} \right) - 3|u|^4- 4c|u|^2 = 4a,
\end{equation}
\begin{equation}\label{2.23}
2\left|\frac{d u}{dx}\right|^2 -|u|^6 -2c |u|^4 - 4(a+2b)|u|^2 = 8d,
\end{equation}
where $a$ and $d$ are two real-valued parameters in addition to real-valued parameters $b$ and $c$ of the second-order equation (\ref{2.9}). Substituting (\ref{2.22}) and (\ref{2.23}) into (\ref{2.21}) yields the polynomial 
$P(\lambda)$ in the form (\ref{polynomial}).

It follows from (\ref{2.19}) that $\det \Psi$  has only simple poles at $(\pm \lambda_1, \pm \lambda_2)$ if $\lambda_2 \neq \pm \lambda_1$ (see Remark \ref{remark-5}). Therefore, the two eigenvalues $\lambda_1$ and $\lambda_2$ are chosen from the eight roots of the polynomial $P(\lambda)$.

It remains to relate parameters $a$ and $d$ to $\lambda_1$, $\lambda_2$, $H$, and $F$. By equating (\ref{2.19}) and (\ref{2.20}) and substituting (\ref{polynomial}) for $P(\lambda)$, we derive
coefficients for even powers of $\lambda$. The coefficient of $\lambda^8$ is satisfied identically. The
coefficients of $\lambda^6$ and $\lambda^0$ recover relations (\ref{2.10}) for parameters $c$ and $b$ respectively,
and the coefficients of $\lambda^4$ and $\lambda^2$ yield the following relations for parameters $a$ and $d$
respectively, 
\end{proof}

The polynomial $P(\lambda)$ in (\ref{polynomial}) generally has four pairs of distinct roots, two of which must be chosen as the eigenvalues $\lambda_1$ and $\lambda_2$ of the algebraic method satisfying either the reduction $\lambda_2 = \bar{\lambda}_1$ with ${\rm Re}(\lambda_1) \neq 0$ or the reduction $\lambda_1, \lambda_2 \in i \mathbb{R}$ with $\lambda_2 \neq \pm \lambda_1$. We label the four pairs of distinct roots of $P(\lambda)$ as $\{ \pm \lambda_1, \pm\lambda_2, \pm\lambda_3, \pm \lambda_4\}$, where the complementary eigenvalues $\lambda_3$ and $\lambda_4$ are not used in the constraint (\ref{2.1}). The polynomial $P(\lambda)$ can be factorized by its roots as
\begin{equation}\label{3.22}
P(\lambda) = (\lambda^2  - \lambda_1^2) (\lambda^2-\lambda_2^2) (\lambda^2  - \lambda_3^2) (\lambda^2-\lambda_4^2).
\end{equation}
It follows by expanding (\ref{3.22}) in even powers of $\lambda$ and comparing it with (\ref{polynomial})  that 
\begin{equation}\label{3.23}
\left\{\begin{array}{l}
\lambda_1^2+\lambda_2^2+\lambda_3^2+ \lambda_4^2=2c,\\
(\lambda_1^2+\lambda_2^2)(\lambda_3^2+ \lambda_4^2)+\lambda_1^2 \lambda_2^2 +\lambda_3^2 \lambda_4^2=a+2b+c^2,\\
\lambda_1^2 \lambda_2^2 (\lambda_3^2+ \lambda_4^2) +
\lambda_3^2\lambda_4^2 (\lambda_1^2+\lambda_2^2)=ac+2bc-d,\\
\lambda_1^2 \lambda_2^2 \lambda_3^2 \lambda_4^2 =b^2.
\end{array}
\right.
\end{equation}
It follows from the second equation of 
(\ref{2.10}) and the first equation of (\ref{3.23}) that
\begin{equation}\label{3.24}
H=\frac12 (\lambda_3^2 + \lambda_4^2 -\lambda_1^2 - \lambda_2^2).
\end{equation}
It follows from the last equation of (\ref{3.23}) that two cases are possible for $b$: either $b = \lambda_1 \lambda_2 \lambda_3 \lambda_4$ or $b = -\lambda_1 \lambda_2 \lambda_3 \lambda_4$. The second choice, however, follows from the first one by replacing
$\lambda_4 \mapsto -\lambda_4$, so we will only consider the case
$b = \lambda_1 \lambda_2 \lambda_3 \lambda_4$. It follows from 
the first equation of (\ref{2.10}) with $b = \lambda_1 \lambda_2 \lambda_3 \lambda_4$ that 
\begin{equation}\label{3.24a}
F = \frac{\lambda_3\lambda_4}{\lambda_1 \lambda_2} - 1.
\end{equation}
Substituting (\ref{3.24}) and (\ref{3.24a}) into (\ref{2.10}) and (\ref{3.21}) allows us
to express parameters $a$, $b$, $c$, and $d$ in terms of the eigenvalues $\{ \lambda_1, \lambda_2, \lambda_3, \lambda_4\}$:
\begin{equation}\label{3.25}
\left\{ 
\begin{array}{lll}
\displaystyle 
a &=& -\frac{1}{4} [ (\lambda_1 + \lambda_2)^2 - (\lambda_3 + \lambda_4)^2]
[ (\lambda_1 - \lambda_2)^2 - (\lambda_3 - \lambda_4)^2],\\
\displaystyle 
b&=& \lambda_1 \lambda_2 \lambda_3 \lambda_4, \\
\displaystyle 
c&=& \frac12 (\lambda_1^2+\lambda_2^2+\lambda_3^2+\lambda_4^2),\\
\displaystyle 
d&=& -\frac 18 (\lambda_1^2+\lambda_2^2 -\lambda_3^2-\lambda_4^2)
(\lambda_1^2-\lambda_2^2+\lambda_3^2-\lambda_4^2)
(\lambda_1^2-\lambda_2^2-\lambda_3^2+\lambda_4^2).
\end{array}\right.
\end{equation}
We have checked that all equations of system (\ref{3.23}) are satisfied under the relations (\ref{3.25}).

Outcomes of the algebraic method are summarized as follows. 

\vspace{0.2cm}

\centerline{\fbox{\parbox[cs]{1,0\textwidth}{
The standing waves of the DNLS equation (\ref{dnls}) of the form (\ref{trav-wave}) satisfy the second-order equation (\ref{2.9}) and the first-order invariants (\ref{2.22}) and (\ref{2.23}) with four parameters $a$, $b$, $c$, and $d$. These parameters generally determine four distinct pairs of roots of the polynomial $P(\lambda)$ in (\ref{polynomial}) and (\ref{3.22}). The connection formulas (\ref{3.23}) are inverted in the form (\ref{3.25}). Picking any two distinct roots of the polynomial $P(\lambda)$ as two eigenvalues $\lambda_1$ and $\lambda_2$ of the algebraic method allows us to relate the standing wave of the form (\ref{trav-wave}) to squared eigenfunctions of the KN spectral problem (\ref{lax-1}) by either (\ref{constraint-red-2}) if $\lambda_2 = \bar{\lambda}_1$ with ${\rm Re}(\lambda_1) \neq 0$ or (\ref{constraint-red-1}) if $\lambda_1, \lambda_2 \in i \mathbb{R}$ with $\lambda_1 \neq \pm \lambda_2$. If the standing wave is $L$-periodic, 
so are the squared eigenfunctions due to relations (\ref{2.1}) and (\ref{2.8}). Then, the eigenvectors $\varphi = (p_1,q_1)^T$ and $\varphi = (p_2,q_2)^T$ for the eigenvalues $\lambda_1$ and $\lambda_2$ are either $L$-periodic or $L$-anti-periodic.}}}

\section{Modulational instability of periodic waves}
\label{sec:7}

Spectral stability of the standing waves of the form (\ref{trav-wave}) in the time evolution of the DNLS equation (\ref{dnls}) can be studied by adding
a perturbation $w$ of the form
\begin{equation}
\label{ansatz}
u(x,t) = e^{4ibt} \left[ \tilde{u}(x+2ct) + w(x+2ct,t) \right].
\end{equation}
Substituting (\ref{ansatz}) into (\ref{dnls}) and truncating at the linear terms in $w$ yields the linearized system of equations
\begin{equation}
\left\{ \begin{array}{l}
i w_t - 4 b w + 2 i c w_x + w_{xx} + i [2|u|^2 w_x + u^2 \bar{w}_x
+ 2(u \bar{u}_x + u_x \bar{u}) w + 2 u u_x \bar{w}] = 0, \\
-i \bar{w}_t - 4 b \bar{w} - 2 i c \bar{w}_x + \bar{w}_{xx}
- i [2|u|^2 \bar{w}_x + \bar{u}^2 w_x
+ 2(u \bar{u}_x + u_x \bar{u}) \bar{w} + 2 \bar{u} \bar{u}_x w] = 0,
\end{array} \right.
\label{dnls-lin}
\end{equation}
where the tilde notation for $u$ has been dropped as before. Variables can be separated in the linearized system (\ref{dnls-lin}) by 
\begin{equation}
\label{w-eigen}
w(x,t) = w_1(x) e^{t \Lambda}, \quad \bar{w}(x,t) = w_2(x) e^{t \Lambda},
\end{equation}
where $w_1$, $w_2$, and $\Lambda$ are found from the spectral problem 
\begin{eqnarray}
\left[ \begin{array}{cc} 
\mathcal{L}
& - i u^2 \partial_x - 2i u u_x \\
i \bar{u}^2 \partial_x + 2i \bar{u} \bar{u}_x &
\bar{\mathcal{L}}
\end{array} \right] \;
\left[ \begin{array}{c} w_1 \\ w_2 \end{array} \right] 
= i \Lambda \sigma_3 \left[ \begin{array}{c} w_1 \\ w_2 \end{array} \right],
\label{dnls-spectral}
\end{eqnarray}
where $\mathcal{L}:= 4 b - 2 i c \partial_x - \partial_x^2 
- 2 i |u|^2 \partial_x - 2i (u \bar{u}_x + u_x \bar{u})$ 
and $\sigma_3 = {\rm diag}(1,-1)$. Note that $w_1 \neq \bar w_2$ if $\Lambda \notin \mathbb{R}$. 

Our goal is to find the admissible values of $\Lambda$ for which $w_1$ and $w_2$ are bounded functions of $x$ on $\mathbb{R}$. By Floquet's theorem \cite{Kuchment}, the admissible values of $\Lambda$ form continuous spectral bands on the complex $\Lambda$-plane. The union of all admissible values 
of $\Lambda$ is referred to as the Floquet spectrum for the 
spectral problem (\ref{dnls-spectral}). 

The spectral and modulational instability of the standing wave are defined as follows.

\begin{definition}
	\label{def-stab}
If there exists $\Lambda$ with ${\rm Re}(\Lambda) > 0$ for which 
$(w_1,w_2) \in L^{\infty}(\mathbb{R})$ in (\ref{dnls-spectral}), then 
the standing wave of the form (\ref{trav-wave}) is called spectrally unstable. It is called modulationally unstable if the unstable spectral band with ${\rm Re}(\Lambda) > 0$ intersects the origin in the $\Lambda$-plane.
\end{definition}

\begin{remark}
	Modulational instability in Definition \ref{def-stab} is a subset of spectral instability. It occurs if the perturbations of increasingly long spatial periods grow in time \cite{BHJ}. The importance of the modulational instability is illustrated in the formation of rogue waves on the modulationally unstable background. The rogue waves are not fully localized in space and time if the periodic standing wave background is modulationally stable  \cite{CPW,PW}.
\end{remark}

There exists an explicit relation between the admissible values of $\Lambda$ 
for which $(w_1,w_2) \in L^{\infty}(\mathbb{R})$ 
and suitable solutions of the Lax equations
(\ref{lax-1-intro}) and (\ref{lax-2-intro}).
By substituting the standing waves of the form (\ref{trav-wave}) into 
the Lax equations (\ref{lax-1-intro}) and (\ref{lax-2-intro}) and separating the variables in the form
\begin{equation}
\label{varphi-separation}
\varphi(x,t) = e^{2ibt \sigma_3} \tilde{\varphi}(x+2ct,t), 
\end{equation}
we obtain the following system of linear equations:
\begin{equation}\label{lax-1-stab}
\varphi_x = U\varphi, \quad
\varphi_t + 2 i b \sigma_3 \varphi + 2 c \varphi_x =V\varphi, 
\end{equation}
where
\begin{equation}
\label{potentials}
 U=\left(\begin{array}{cc}
-i \lambda^2 & \lambda u\\
-\lambda \bar{u} & i \lambda^2\\
\end{array}
\right), \;\;
V = \left(\begin{array}{cc}
-2 i \lambda^4 + i \lambda^2 |u|^2 & 2 \lambda^3 u + \lambda (i u_x - |u|^2 u)\\
-2 \lambda^3 \bar{u} + \lambda (i \bar{u}_x + |u|^2 \bar{u}) & 2 i \lambda^4 - i \lambda^2 |u|^2\\
\end{array}
\right),
\end{equation}
and the tilde notations for $\varphi$ and $u$ have been dropped again. 
We note that $U$ and $V$ in (\ref{potentials}) are $t$-independent since the transformed solution $u$ (former $\tilde{u}$) is a function of $x$ only.
The following proposition summarizes the result obtained in \cite{ChenYang}. 

\begin{proposition}
	\label{prop-relation}
Let $\varphi = (\varphi_1,\varphi_2)^T$ be the eigenvector of the Lax system (\ref{lax-1-stab}) for the eigenvalue $\lambda \in \mathbb{C}$. Then the perturbation $w$ satisfying the linearized DNLS equation (\ref{dnls-lin}) is expressed by 
\begin{equation}
\label{squared-eigen}
w = \partial_x \varphi_1^2, \quad \bar{w} = \partial_x \varphi_2^2.
\end{equation}
Consequently, if $\varphi(t,x) = \chi(x) e^{t \Omega}$, then 
$w_1 = \partial_x \chi_1^2$, $w_2 = \partial_x \chi_2^2$, and
$\Lambda = 2\Omega$ in (\ref{w-eigen}). Moreover, $\Omega$ is related 
to $\lambda$ by 
\begin{equation}
\label{lambda-stab}
\Omega = \pm 2 i \sqrt{P(\lambda)},
\end{equation}
where $P(\lambda)$ is given by (\ref{polynomial}).
\end{proposition}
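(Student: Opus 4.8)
The plan is to reduce the second equation of the Lax system (\ref{lax-1-stab}) to an eigenvalue problem and then to exploit the squared-eigenfunction structure. Using $\varphi_x = U\varphi$ to eliminate $\varphi_x$ in the second equation of (\ref{lax-1-stab}), I would rewrite it as $\varphi_t = W\varphi$ with
\begin{equation*}
W := V - 2ib\sigma_3 - 2cU,
\end{equation*}
which is traceless (since $U$, $V$, $\sigma_3$ all are) and $t$-independent (since $u=\tilde u$ depends on $x$ only). The single most useful observation is that, writing out the entries of $W$ from (\ref{potentials}) and comparing them with the entries (\ref{2.16})--(\ref{2.18}) of $\Psi$ (both sides being expressed through the same parameters $b$, $c$ and the potential $u$), one obtains the exact identity
\begin{equation*}
W = 2(\lambda^2 - \lambda_1^2)(\lambda^2 - \lambda_2^2)\, \Psi .
\end{equation*}
This identity is the bridge between the linearized problem here and the algebraic computation of Section \ref{sec:2}.

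For the first claim (\ref{squared-eigen}), I would use that $U$ and $W$ are traceless, so that $\psi := (\varphi_2,-\varphi_1)^T$ solves the adjoint equations $\psi_x = -U^T\psi$ and $\psi_t = -W^T\psi$. Then the squared-eigenfunction matrix $\Phi := \varphi\,\psi^T$ satisfies the Lax-type equations $\Phi_x = [U,\Phi]$ and $\Phi_t = [W,\Phi]$. Writing $f=\varphi_1^2$, $g=\varphi_2^2$, $h=\varphi_1\varphi_2$, the entries of these matrix equations give a closed first-order system for $(f,g,h)$ in both $x$ and $t$; in particular $f_x = -2i\lambda^2 f + 2\lambda u h$ and $g_x = 2i\lambda^2 g - 2\lambda\bar u h$. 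Setting $w=f_x$ and $\bar w=g_x$, I would compute $w_t$, $w_x$, $w_{xx}$, $\bar w_x$ (and their conjugate counterparts) from this closed system, substitute into the linearized DNLS equation (\ref{dnls-lin}), and check that every term cancels. The cancellation is forced by the integrable structure, but it uses that $u$ solves the stationary equation (\ref{2.9}) together with the two first-order invariants (\ref{2.22})--(\ref{2.23}); the identity reduces to a polynomial identity in $\lambda$ whose coefficients vanish by virtue of those relations. I expect this bookkeeping — organizing the $\lambda$-powers and repeatedly eliminating derivatives of $u$ through the stationary equation and its invariants — to be the main obstacle, though it is mechanical rather than conceptual.

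For the dispersion relation (\ref{lambda-stab}), I would substitute $\varphi(x,t)=\chi(x)e^{t\Omega}$ into $\varphi_t = W\varphi$. Since $W$ is $t$-independent, this forces the pointwise eigenvalue relation $W\chi = \Omega\chi$, so that $\Omega$ is an eigenvalue of the traceless $2\times 2$ matrix $W$, i.e. $\Omega^2 = -\det W$. Using the identity above together with the computation (\ref{2.20}) of $\det\Psi$, I would obtain
\begin{equation*}
\det W = 4(\lambda^2-\lambda_1^2)^2(\lambda^2-\lambda_2^2)^2 \det\Psi = 4P(\lambda),
\end{equation*}
hence $\Omega = \pm 2i\sqrt{P(\lambda)}$. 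Finally, from (\ref{squared-eigen}) one has $w=\partial_x\varphi_1^2 = e^{2t\Omega}\partial_x\chi_1^2$, so matching with the separated form (\ref{w-eigen}) gives $w_1=\partial_x\chi_1^2$, $w_2=\partial_x\chi_2^2$, and $\Lambda = 2\Omega = \pm 4i\sqrt{P(\lambda)}$, which is exactly (\ref{lambda-stab}). This part is short once the identity $W = 2(\lambda^2-\lambda_1^2)(\lambda^2-\lambda_2^2)\Psi$ is in hand; the only care needed is the factor-of-two bookkeeping between $\Omega$ and $\Lambda$ that arises because $w$ is quadratic in $\varphi$.
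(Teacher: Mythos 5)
Your proposal is correct, and its skeleton matches the paper's: the first claim is verified by substituting the squared eigenfunctions into (\ref{dnls-lin}), and the dispersion relation comes from the characteristic equation of the $t$-independent matrix $W := V - 2ib\sigma_3 - 2cU$, which the paper writes as $\det(\Omega + 2ib\sigma_3 + 2cU - V) = 0$. Where you genuinely depart from the paper is the bridge identity $W = 2(\lambda^2-\lambda_1^2)(\lambda^2-\lambda_2^2)\Psi$, which the paper never states; it is correct (one checks it entrywise against (\ref{2.16})--(\ref{2.18})), and it buys you the determinant for free: $\det W = 4(\lambda^2-\lambda_1^2)^2(\lambda^2-\lambda_2^2)^2\det\Psi = 4P(\lambda)$ by (\ref{2.20}) and (\ref{polynomial}), whereas the paper expands the determinant from scratch and invokes the invariants (\ref{2.22})--(\ref{2.23}) directly to reach $\Omega^2 + 4P(\lambda) = 0$. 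Your route also makes conceptually transparent why the same polynomial $P(\lambda)$ governs both the algebraic method and the stability spectrum. Two caveats. First, for (\ref{squared-eigen}) you describe the cancellation but do not execute it; the paper does, by splitting $(w,\bar w)$ into the two pairs $(\varphi_1^2,-\varphi_2^2)$ and $(u\varphi_1\varphi_2,-\bar u\varphi_1\varphi_2)$ and displaying two identities whose $(-i\lambda)$-weighted sum vanishes --- note that your $w = f_x = 2\lambda(-i\lambda\varphi_1^2 + u\varphi_1\varphi_2)$ is exactly $2\lambda$ times the paper's combination, so your closed $(f,g,h)$ system is the right vehicle for the same bookkeeping. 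Second, that cancellation needs only the stationary equation (\ref{2.9}), which is the compatibility condition of the Lax system (\ref{lax-1-stab}); the first-order invariants (\ref{2.22})--(\ref{2.23}) are not needed there, and enter only where you (like the paper) convert $\det W$ into the $x$-independent polynomial (\ref{polynomial}).
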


\begin{proof}
By the linear superposition principle, it suffices to show that
\begin{equation}
\label{squared-eigen-new}
w = - i \lambda \varphi_1^2 + u \varphi_1 \varphi_2, \quad
\bar{w} = i \lambda \varphi_2^2 - \bar{u} \varphi_1 \varphi_2,
\end{equation}
satisfies the linearized DNLS equation (\ref{dnls-lin}) if $\varphi = (\varphi_1,\varphi_2)^T$ satisfies the Lax equations 
(\ref{lax-1-stab}). The two terms
in (\ref{squared-eigen-new}) are inspected separately as follows:
\begin{eqnarray*}
	&& i \partial_t(\varphi_1^2) - 4 b \varphi_1^2 + 2 i c \partial_x (\varphi_1^2) + \partial_x^2 (\varphi_1^2) \\
	&& +
	i [2|u|^2 \partial_x  (\varphi_1^2) - u^2 \partial_x  (\varphi_2^2)
	+ 2(u \bar{u}_x + u_x \bar{u}) \varphi_1^2 - 2 u u_x \varphi_2^2] \\
	&& = 4 u^2 \varphi_2^2 \lambda^2 + 4 i |u|^2 u \varphi_1 \varphi_2 \lambda
	+ 2 i (u \bar{u}_x + u_x \bar{u}) \varphi_1^2 - 2 i u u_x \varphi_2^2,
\end{eqnarray*}
and
\begin{eqnarray*}
	&& i \partial_t (u \varphi_1 \varphi_2) - 4 b u \varphi_1 \varphi_2 + 2 i c \partial_x (u \varphi_1 \varphi_2) + \partial_x^2 (u \varphi_1 \varphi_2) \\
	&& +
	i [2|u|^2 \partial_x  (u \varphi_1 \varphi_2) - u^2 \partial_x  (\bar{u} \varphi_1 \varphi_2)
	+ 2(u \bar{u}_x + u_x \bar{u}) u \varphi_1 \varphi_2 - 2 u u_x \bar{u} \varphi_1 \varphi_2] \\
	&& = 4 i u^2 \varphi_2^2 \lambda^3 - 4 |u|^2 u \varphi_1 \varphi_2 \lambda^2
	- 2 (u \bar{u}_x + u_x \bar{u}) \varphi_1^2 \lambda + 2 u u_x \varphi_2^2 \lambda.
\end{eqnarray*}
Summing the first equality multiplied by $(-i \lambda)$ and the second equality yields zero which verifies the relations (\ref{squared-eigen-new}).

In order to show (\ref{lambda-stab}), we recall that $U$ and $V$ in (\ref{potentials}) are $t$-independent. Hence we can separate
the variables in the form $\varphi(t,x) = \chi(x) e^{t \Omega}$
and obtain $\Omega$ from the characteristic equation
\begin{equation}
\label{det-eq}
{\rm det}(\Omega + 2 i b \sigma_3 + 2 c U - V) = 0.
\end{equation}
By expanding the determinant and using first-order invariants (\ref{2.22}) and (\ref{2.23}), we verify that the characteristic equation (\ref{det-eq}) is equivalent to 
\begin{equation}
\label{Omega-eq}
\Omega^2 + 4 P(\lambda) = 0, 
\end{equation}
which yields (\ref{lambda-stab}) after extracting the square root.
\end{proof}

\begin{remark}
Roots of the polynomial $P(\lambda)$ in (\ref{polynomial}) 
are mapped to the origin of the $\Lambda$ plane.
\end{remark}

The following two propositions state explicitly the stability results on  the DNLS equation which follow from Theorem 9 in \cite{DU2} (see also their Section 6.1). 

\begin{proposition}
	\label{prop-stability-real}
	Assume that $P(\lambda)$ is given by (\ref{3.22}) with the roots
	$(\pm \lambda_1,\pm \lambda_2,\pm \lambda_3,\pm \lambda_4) \in \mathbb{C}\backslash \mathbb{R}$. If $\lambda \in \mathbb{R}$, then $\Lambda \in i \mathbb{R}$.
\end{proposition}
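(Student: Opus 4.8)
The plan is to reduce the statement, via the explicit relation (\ref{lambda-stab}) of Proposition \ref{prop-relation}, to a positivity claim for $P$ on the real line. Since the coefficients $a,b,c,d$ of $P$ in (\ref{polynomial}) are real, the value $P(\lambda)$ is real for every $\lambda \in \mathbb{R}$. Given $\Lambda = \pm 4i\sqrt{P(\lambda)}$, it then follows that $\Lambda \in i\mathbb{R}$ precisely when $P(\lambda) \geq 0$: a positive $P(\lambda)$ gives $\sqrt{P(\lambda)}\in\mathbb{R}$ and hence $\Lambda\in i\mathbb{R}$, while a negative $P(\lambda)$ makes $\sqrt{P(\lambda)}$ purely imaginary and $\Lambda$ real and nonzero. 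Thus the whole proposition reduces to showing $P(\lambda) > 0$ for all real $\lambda$ under the hypothesis that none of the eight roots lies on $\mathbb{R}$.

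The key reduction is to pass to the squared variable. Because only even powers of $\lambda$ appear in (\ref{polynomial}), I would write $P(\lambda) = Q(\lambda^2)$ with $Q(\mu) = \prod_{j=1}^{4} (\mu - \mu_j)$ and $\mu_j := \lambda_j^2$, using the factorization (\ref{3.22}). For real $\lambda$ one has $\mu = \lambda^2 \geq 0$, so it suffices to prove $Q(\mu) > 0$ for every $\mu \geq 0$.

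The remaining step is sign bookkeeping on the root-squares $\mu_j$ according to the hypothesis $\lambda_j \in \mathbb{C}\backslash\mathbb{R}$. If $\lambda_j \in i\mathbb{R}$, then $\mu_j = \lambda_j^2$ is a \emph{negative} real number, so its factor obeys $\mu - \mu_j = \mu + |\mu_j| > 0$ for all $\mu \geq 0$. If instead $\lambda_j$ has nonzero real and imaginary parts, then $\mu_j$ is genuinely complex; since $Q$ inherits real coefficients from $P$, such complex $\mu_j$ occur in conjugate pairs, and each pair contributes the factor $(\mu - \mathrm{Re}\,\mu_j)^2 + (\mathrm{Im}\,\mu_j)^2 > 0$. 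Crucially, no $\mu_j$ can be a nonnegative real, for $\mu_j \geq 0$ would force $\lambda_j = \pm\sqrt{\mu_j} \in \mathbb{R}$, contradicting the hypothesis. Hence $Q(\mu)$ is a product of strictly positive terms for $\mu \geq 0$, giving $P(\lambda) = Q(\lambda^2) > 0$ on $\mathbb{R}$ and therefore $\Lambda \in i\mathbb{R}$.

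I do not expect a serious obstacle here, as the argument is elementary once the reduction to $\mu = \lambda^2$ is made. The only point needing care is the organization of the $\mu_j$ into negative reals and conjugate pairs: I should verify that the reality of the coefficients of $P$ in (\ref{polynomial}) indeed forces the non-real squares to pair up under conjugation, and that the exclusion $\lambda_j \notin \mathbb{R}$ is exactly what keeps every $\mu_j$ off the nonnegative real axis, so that the positivity is strict rather than merely nonnegative.
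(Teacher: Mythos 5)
Your proposal is correct and follows essentially the same route as the paper: reduce via (\ref{lambda-stab}) to the claim that $P(\lambda)>0$ for all real $\lambda$, which holds because the factorization (\ref{3.22}) has no real roots under the hypothesis. The paper states this positivity in one line, while you justify it explicitly through the substitution $\mu=\lambda^2$ and sign bookkeeping of the factors of $Q(\mu)$; this is a more detailed rendering of the same argument, not a different one.
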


\begin{proof}
It follows from (\ref{3.22}) that if $(\pm \lambda_1,\pm \lambda_2,\pm \lambda_3,\pm \lambda_4) \in \mathbb{C}\backslash \mathbb{R}$, then 
	$P(\lambda) > 0$ for every $\lambda \in \mathbb{R}$. 
Indeed, $P(\lambda) \sim \lambda^8$ as $|\lambda| \to \infty$ 
and $P(\lambda)$ has no real roots. If $P(\lambda) > 0$, 
then $\Omega \in i \mathbb{R}$ in (\ref{lambda-stab}) so that 
$\Lambda = 2 \Omega \in i \mathbb{R}$.
\end{proof}

\begin{proposition}
	\label{prop-stability}
	Assume that $P(\lambda)$ is given by (\ref{3.22}) with the roots
$(\pm \lambda_1,\pm \lambda_2,\pm \lambda_3,\pm \lambda_4)$. If 
$\lambda \in i \mathbb{R}$, then $\Lambda \in i \mathbb{R}$, provided the following conditions are true: 
\begin{itemize}
	\item The roots form two complex quadruplets or four pairs of double real eigenvalues.
	\item Two pairs of roots are purely imaginary, e.g. $\lambda_{3,4} = i \beta_{3,4}$ with $0 < \beta_4 < \beta_3$, and 
\begin{equation}
\label{stab-1}
{\rm Im}(\lambda) \in (-\infty, -\beta_3] \cup  [-\beta_4,\beta_4] \cup [\beta_3,\infty).
\end{equation}
\item Four pairs of roots are purely imaginary, e.g. $\lambda_{1,2,3,4} = i \beta_{1,2,3,4}$ with $0 < \beta_4 < \beta_3 < \beta_2 < \beta_1$, and
\begin{equation}
\label{stab-2}
{\rm Im}(\lambda) \in (-\infty, -\beta_1] \cup [-\beta_2,-\beta_3] \cup [-\beta_4,\beta_4] \cup [\beta_3,\beta_2] \cup [\beta_1,\infty).
\end{equation}
\end{itemize}
\end{proposition}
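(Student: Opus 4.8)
The plan is to reduce the entire statement to a sign analysis of the single real quantity $P(\lambda)$ along the imaginary axis. First I would record that $P$ is an \emph{even} polynomial in $\lambda$ with real coefficients, as is visible from (\ref{polynomial}); hence it can be written as $P(\lambda) = \widetilde{P}(\lambda^2)$ with $\widetilde{P}$ having real coefficients. Consequently, for $\lambda \in i\mathbb{R}$ we have $\lambda^2 = -\mu^2$ with $\mu := {\rm Im}(\lambda) \in \mathbb{R}$, so that $P(\lambda) = \widetilde{P}(-\mu^2) \in \mathbb{R}$ automatically. Combining this with (\ref{Omega-eq}) and $\Lambda = 2\Omega$, which give $\Lambda^2 = 4\Omega^2 = -16\, P(\lambda)$ (equivalently (\ref{lambda-stab})), one sees that $\Lambda \in i\mathbb{R}$ if and only if $\Lambda^2 \le 0$, i.e. if and only if $P(\lambda) \ge 0$. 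Thus in every case it suffices to determine the set of $\mu = {\rm Im}(\lambda)$ for which $P(i\mu) \ge 0$.

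Next I would substitute $\lambda^2 = -\mu^2$ into the factorization (\ref{3.22}) and group the factors according to the type of roots. A complex quadruplet contributes a conjugate pair $(\lambda_j, \bar\lambda_j)$, whose two factors combine into $(\lambda^2 - \lambda_j^2)(\lambda^2 - \bar\lambda_j^2) = |\mu^2 + \lambda_j^2|^2 \ge 0$, using that $\mu$ is real; a double real root $r$ contributes $(\lambda^2 - r^2)^2 \ge 0$; and a purely imaginary root $\lambda_j = i\beta_j$ contributes the real factor $\lambda^2 - \lambda_j^2 = \beta_j^2 - \mu^2$, linear in $\mu^2$. This immediately settles the first case: if all roots form complex quadruplets or double real pairs, then $P(i\mu)$ is a product of nonnegative terms, so $P(i\mu) \ge 0$ for every $\mu$ and therefore $\Lambda \in i\mathbb{R}$ unconditionally.

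For the second case I would write $P(i\mu) = |\mu^2 + \lambda_1^2|^2\,(\beta_3^2 - \mu^2)(\beta_4^2 - \mu^2)$, where the prefactor coming from the single complex quadruplet $(\pm\lambda_1, \pm\bar\lambda_1)$ is nonnegative. The sign of $P$ is then governed by $g(\mu) := (\beta_3^2 - \mu^2)(\beta_4^2 - \mu^2)$, an upward parabola in $s := \mu^2$ with roots $\beta_4^2 < \beta_3^2$; a one-line sign chart shows $g \ge 0$ precisely for $s \in [0,\beta_4^2] \cup [\beta_3^2, \infty)$, which upon writing $\mu = \pm\sqrt{s}$ folds to exactly (\ref{stab-1}). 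For the third case, $P(i\mu) = \prod_{j=1}^4 (\beta_j^2 - \mu^2)$ is a quartic in $s = \mu^2$ with simple roots $\beta_4^2 < \beta_3^2 < \beta_2^2 < \beta_1^2$ and positive leading behavior, so the sign chart gives $P \ge 0$ exactly on $s \in [0,\beta_4^2] \cup [\beta_3^2,\beta_2^2] \cup [\beta_1^2, \infty)$; translating each $s$-interval back to $\mu = \pm\sqrt{s}$ produces (\ref{stab-2}).

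The computations are elementary, so the only place demanding genuine care is the bookkeeping in this last translation step: each $s$-interval bounded away from the origin unfolds into two symmetric $\mu$-intervals (for instance $[\beta_3^2,\beta_2^2]$ becomes $[-\beta_2,-\beta_3] \cup [\beta_3,\beta_2]$), and it is easy to misplace an endpoint or drop one of the symmetric branches when reassembling (\ref{stab-2}). I would also note, for completeness, that on the complementary set the product of the imaginary-root factors is strictly negative while the quadruplet prefactor (when present) stays strictly positive, so $P(i\mu) < 0$ and $\Lambda \in \mathbb{R}\setminus\{0\}$ there; this is precisely the mechanism producing the instability, although the proposition as stated only asserts the inclusions guaranteeing $\Lambda \in i\mathbb{R}$.
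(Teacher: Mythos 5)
Your proof is correct and takes essentially the same approach as the paper: both reduce the statement to a sign analysis of $P$ restricted to the imaginary axis via the substitution $z=-\lambda^2=\mu^2$, factoring $P$ according to the type of roots (quadruplet factors nonnegative, imaginary-root factors $\beta_j^2-\mu^2$) and reading off the intervals. Your write-up is merely more explicit than the paper's about the equivalence $\Lambda\in i\mathbb{R}\Leftrightarrow P(\lambda)\ge 0$ and the endpoint bookkeeping, but the underlying argument is identical.
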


\begin{proof}
For $\lambda \in i \mathbb{R}$, we can rewrite the polynomial 
$P(\lambda)$ given by (\ref{3.22}) in the form
\begin{equation}\label{3.22a}
	P(z) = (z + \lambda_1^2) (z + \lambda_2^2) (z + \lambda_3^2) (z + \lambda_4^2),
\end{equation}
where $z = -\lambda^2 \geq 0$ and the notation for $P(z)$ has been overwritten. 

If the roots form complex quadruplets or double real eigenvalues, then $P(z) > 0$ for every $z \geq 0$. Indeed, $P(z) \sim z^4$ as $|z| \to \infty$ 
	and $P(z)$ has no real roots on $\mathbb{R}_+$. If $P(z) > 0$, then $\Omega \in i \mathbb{R}$ in (\ref{lambda-stab}) so that $\Lambda = 2 \Omega \in i \mathbb{R}$.

If two or four pairs of purely imaginary eigenvalues occur, then $P(z) > 0$ for either $z \in (0,\beta_4^2) \cup (\beta_3^2,\infty)$ or $z \in (0,\beta_4^2) \cup (\beta_3^2,\beta_2^2) \cup (\beta_1^2,\infty)$, respectively. This gives the respective constraints (\ref{stab-1}) and (\ref{stab-2}) on the admissible values of $\lambda \in i \mathbb{R}$, for which $\Lambda \in i \mathbb{R}$.
\end{proof}

\begin{remark}
It was also proven in Theorem 9 in \cite{DU2} that if 
$\Lambda \in i \mathbb{R}$ for a given $\lambda \in \mathbb{R} \cup i \mathbb{R}$, then $\lambda \in \mathbb{R} \cup i \mathbb{R}$ belongs to the Floquet spectrum of the KN spectral problem (\ref{lax-1-intro}). By Propositions \ref{prop-stability-real} and \ref{prop-stability}, this implies that $\mathbb{R} \cup i \mathbb{R} \backslash S$ belongs to the Floquet spectrum of the KN spectral problem (\ref{lax-1-intro}), where $S \subset i \mathbb{R}$ includes either two or four spectral gaps in (\ref{stab-1}) and (\ref{stab-2}), respectively. Our numerical results suggest that $\lambda \in S \subset i \mathbb{R}$ is not in the Floquet spectrum of the KN spectral problem (\ref{lax-1-intro}).
\end{remark}

Outcomes of the modulational stability analysis are summarized as follows.

\vspace{0.2cm}

\centerline{\fbox{\parbox[cs]{1,0\textwidth}{If we compute the admissible values of $\lambda$ in the Floquet spectrum of the Lax system (\ref{lax-1-stab}) for the standing waves of the form 
(\ref{trav-wave}), then we can obtain the admissible values of 
$\Lambda$ in the Floquet spectrum of the stability problem (\ref{dnls-spectral})
by using $\Lambda = 2 \Omega = \pm 4i \sqrt{P(\lambda)}$. 
By Propositions \ref{prop-stability-real} and \ref{prop-stability}, 
spectral instability of the standing waves may only arise if there are admissible values of $\lambda$ in open quadrants of the complex plane or on the imaginary axis in either two or four spectral gaps (\ref{stab-1}) and (\ref{stab-2}) respectively.
}}}

\section{Classification of periodic standing waves}
\label{sec:3}

Here we characterize the periodic standing waves of the DNLS equation 
(\ref{dnls}) by solving the second-order equation (\ref{2.9}) closed with the first-order invariants (\ref{2.22}) and (\ref{2.23}).

We use the polar form $u(x) = R(x) e^{i\Theta(x)}$ with real-valued $R(x)$ and $\Theta(x)$ for the periodic standing waves. Substituting the polar form into the first-order invariants (\ref{2.22}) and (\ref{2.23}) leads to
\begin{equation}
\label{3.5}
4R^2 \frac{d \Theta}{d x} +3R^4 +4cR^2=-4a \quad \Rightarrow \quad
\frac{d \Theta}{d x} =-\frac{a}{R^2}  -\frac 34 R^2 - c,
\end{equation}
and
\begin{equation}\label{3.1}
2\left(\frac{d R}{d x}\right)^2+ 2R^2 \left(\frac{d \Theta}{d x}\right)^2 -R^6 -2cR^4 -4(a+2b) R^2 =8d.
\end{equation}
Inserting (\ref{3.5}) into (\ref{3.1}) yields the first-order quadrature:
\begin{equation}\label{3.6}
\left(\frac{d R}{d x}\right)^2+ \frac{a^2}{R^2} + \frac{1}{16} R^6 + \frac{c}{2} R^4
+ R^2 \left(c^2 - 4b -\frac{a}{2}\right) + 2ac - 4d = 0.
\end{equation}

Two cases are distinguished here: $a \neq 0$ and $a = 0$. 
In the remainder of this section, we will consider the general case $a \neq 0$ and obtain the periodic solutions in an explicit form. In the following section, 
we will set $a = 0$ and investigate the periodic solutions and their modulational instability in more details. 

If $a \neq 0$, the singularity $R=0$ of the quadrature (\ref{3.6}) is unfolded with the transformation
$\rho = \frac{1}{2} R^2$ which yields
\begin{equation}\label{3.8}
\left(\frac{d \rho}{d x}\right)^2 + Q(\rho) = 0,
\end{equation}
where $Q(\rho)$ is the quartic polynomial given by
\begin{equation}
\label{quartic-Q}
Q(\rho) = \rho^4 +4 c \rho^3 +2(2c^2 -a -8b)\rho^2 +4(ac-2d)\rho +a^2.
\end{equation}
The polynomial $Q(\rho)$ can be factorized
by its roots $(u_1,u_2,u_3,u_4)$ as
\begin{eqnarray}
Q(\rho) = (\rho-u_1)(\rho-u_2)(\rho-u_3)(\rho-u_4). 
\label{3.9}
\end{eqnarray}
Equating coefficients of the same powers in (\ref{quartic-Q}) and (\ref{3.9}) yields
\begin{equation}\label{3.10}\left\{\begin{array}{l}
u_1+u_2+u_3+u_4=- 4c,\\
u_1u_2+u_1u_3+u_1u_4+u_2u_3+u_2u_4+u_3u_4=2(2c^2 -a -8b),\\
u_1u_2u_3+u_1u_2u_4+u_1u_3u_4+u_2u_3u_4=4(2d- ac),\\
u_1u_2u_3u_4=a^2.\\
\end{array}
\right.
\end{equation}

Recall that the parameters $a$, $b$, $c$, and $d$ are related to the 
roots $(\pm \lambda_1,\pm \lambda_2,\pm \lambda_3,\pm \lambda_4)$ of the polynomial $P(\lambda)$ by the transformation formulas (\ref{3.23}) and (\ref{3.25}). The following proposition shows that the roots of $P(\lambda)$ are related to the roots of $Q(\rho)$ by using simple and explicit expressions. The same relations were found before (see Eqs. (3.17) in \cite{Kam1}) by using the so-called resolvent method. 

\begin{proposition}
	\label{prop-roots}
Let $(\pm \lambda_1,\pm \lambda_2,\pm \lambda_3,\pm \lambda_4)$ be the roots of $P(\lambda)$ in (\ref{3.22}) and $(u_1,u_2,u_3,u_4)$ be roots of $Q(\rho)$ in (\ref{3.9}). Then
\begin{equation}\label{3.26}
\left\{\begin{array}{lll}
u_1&=&-\frac12 (\lambda_1-\lambda_2+\lambda_3-\lambda_4)^2,\\
u_2&=&-\frac12 (\lambda_1-\lambda_2-\lambda_3+\lambda_4)^2,\\
u_3&=&-\frac12 (\lambda_1+\lambda_2-\lambda_3-\lambda_4)^2,\\
u_4&=&-\frac12 (\lambda_1+\lambda_2+\lambda_3+\lambda_4)^2.
\end{array}\right.
\end{equation}
\end{proposition}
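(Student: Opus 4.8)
The plan is to prove (\ref{3.26}) by checking that the four proposed quantities satisfy Vieta's relations (\ref{3.10}) for the monic quartic $Q(\rho)$. Since a monic degree-four polynomial is determined by its four elementary symmetric functions, it suffices to show that the elementary symmetric functions of the right-hand sides of (\ref{3.26}) reproduce the coefficients of $Q$, which are expressed through $a,b,c,d$ and hence through $\lambda_1,\dots,\lambda_4$ via (\ref{3.25}). It is convenient to write $u_i=-\tfrac12 s_i^2$, where $s_1,\dots,s_4$ are the four linear forms $\pm\lambda_1\pm\lambda_2\pm\lambda_3\pm\lambda_4$ appearing in (\ref{3.26}) (normalized so that the coefficient of $\lambda_1$ is $+1$). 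The four sign vectors occurring here are precisely the patterns $(\epsilon_1,\dots,\epsilon_4)\in\{\pm1\}^4$ with $\epsilon_1=+1$ and $\prod_j\epsilon_j=+1$, i.e.\ the rows of a signed Hadamard matrix. This structural observation is what drives the whole computation.

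First I would dispose of the two extreme symmetric functions. Mutual orthogonality of the sign vectors gives $\sum_i s_i^2=4\sum_j\lambda_j^2$, so that $u_1+u_2+u_3+u_4=-2\sum_j\lambda_j^2=-4c$ by (\ref{3.25}), which is the first relation of (\ref{3.10}). For the product, the pairwise factorizations $s_1 s_2=(\lambda_1-\lambda_2)^2-(\lambda_3-\lambda_4)^2$ and $s_3 s_4=(\lambda_1+\lambda_2)^2-(\lambda_3+\lambda_4)^2$ give $s_1 s_2 s_3 s_4=-4a$, whence $u_1u_2u_3u_4=\tfrac1{16}(s_1s_2s_3s_4)^2=a^2$, the fourth relation of (\ref{3.10}). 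Both steps are short.

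The remaining two relations are the substance of the proof. I would express $e_2=\sum_{i<j}u_iu_j$ and $e_3=\sum_{i<j<k}u_iu_ju_k$ through the power sums $\sum_i s_i^4$ and $\sum_i s_i^6$ by Newton's identities. Each power sum is evaluated by expanding $(\epsilon_1\lambda_1+\dots+\epsilon_4\lambda_4)^{2k}$ and averaging over the four admissible sign patterns: because of the parity constraint $\prod_j\epsilon_j=+1$, a monomial $\prod_j\lambda_j^{a_j}$ survives the average with weight $4$ exactly when all exponents $a_j$ have the same parity, and cancels otherwise. This collapses $\sum_i s_i^4$ to $4\sum_j\lambda_j^4+24\sum_{j<k}\lambda_j^2\lambda_k^2+96\,\lambda_1\lambda_2\lambda_3\lambda_4$, and one checks directly against (\ref{3.25}) that $e_2=2(2c^2-a-8b)$. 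The same mechanism reduces $\sum_i s_i^6$ to a handful of symmetric monomials, from which the third relation $e_3=4(2d-ac)$ follows after matching with the factorized form of $d$ in (\ref{3.25}).

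I expect the third relation to be the main obstacle: it is the only one touching the parameter $d$, the surviving monomials in $\sum_i s_i^6$ are more numerous, and keeping the numerical coefficients straight while reconciling them with the triple product $d=-\tfrac18(\lambda_1^2+\lambda_2^2-\lambda_3^2-\lambda_4^2)(\lambda_1^2-\lambda_2^2+\lambda_3^2-\lambda_4^2)(\lambda_1^2-\lambda_2^2-\lambda_3^2+\lambda_4^2)$ is where errors are most likely. There is no conceptual difficulty beyond bookkeeping, and the parity rule above is precisely the device that keeps the bookkeeping finite; as an internal sanity check one can confirm that these same elementary symmetric functions are consistent with the relations (\ref{3.23}) already used to invert the connection formulas into (\ref{3.25}).
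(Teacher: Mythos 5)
Your proof is correct and rests on the same foundation as the paper's: both arguments verify that the four candidate quantities in (\ref{3.26}) satisfy Vieta's relations (\ref{3.10}) with $a,b,c,d$ given by (\ref{3.25}), which suffices because a monic quartic is determined by its coefficients. Where you diverge is in the machinery of the verification, and the divergence is instructive. The paper checks the sum, the product, and the second symmetric function by direct expansion (for $e_2$ it uses the grouping $(u_1+u_2)(u_3+u_4)+u_1u_2+u_3u_4$, built on the same pairwise factorizations $s_1s_2$ and $s_3s_4$ that you use), and then states flatly that the third relation ``is checked with Wolfram's Mathematica.'' You instead exploit the Hadamard structure of the four sign vectors: orthogonality gives $\sum_i s_i^2 = 4\sum_j \lambda_j^2$ (strictly, one needs orthogonality of the \emph{columns} of the sign matrix, which follows from row orthogonality because the matrix is square); your parity rule is a correct character-sum argument (the four admissible sign patterns form a Klein four-group, and the sum of a character over it is $4$ or $0$ according as all exponent parities agree or not), collapsing $\sum_i s_i^4$ and $\sum_i s_i^6$ to a few symmetric monomials; and Newton's identities then deliver $e_2$ and $e_3$. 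I verified your intermediate formula $\sum_i s_i^4 = 4\sum_j\lambda_j^4 + 24\sum_{j<k}\lambda_j^2\lambda_k^2 + 96\,\lambda_1\lambda_2\lambda_3\lambda_4$ and the resulting identity $e_2 = 2(2c^2-a-8b)$; both are right. What your route buys is a uniform, hand-checkable scheme that in principle also settles the $e_3$ relation, which is precisely the step the paper could only certify by computer algebra; what it lacks is the execution of that last computation, which you outline and correctly diagnose as bookkeeping but do not complete. In that respect your proposal is no less complete than the published proof, and carrying the $\sum_i s_i^6$ expansion through to the match with $4(2d-ac)$ would in fact improve on it.
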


\begin{proof}
We substitute the roots $(u_1,u_2,u_3,u_4)$ expressed by (\ref{3.26}) into system (\ref{3.10}). The last equation of system (\ref{3.10}) yields the first equation of system (\ref{3.25}) after extracting the negative square root. The first equation of (\ref{3.10}) yields the third equation of (\ref{3.25}),
$$
u_1 + u_2 + u_3 + u_4 = -2(\lambda_1^2 + \lambda_2^2 + \lambda_3^2 + \lambda_4^2) = -4c.
$$
Similarly, the second equation of (\ref{3.10}) is compatible 
with system (\ref{3.25}) due to 
\begin{eqnarray*}
&&	(u_1+u_2)(u_3+u_4) + u_1 u_2 + u_3 u_4 \\
&& =  (\lambda_1^2 -\lambda_2^2)^2 + (\lambda_3^2 - \lambda_4^2)^2 + 2(\lambda_1^2 + \lambda_2^2) (\lambda_3^2 +\lambda_4^2) - 8 \lambda_1 \lambda_2 \lambda_3 \lambda_4 \\
&& \quad + \frac{1}{2} (\lambda_1^4 + 6 \lambda_1^2 \lambda_2^2 + \lambda_2^4) + \frac{1}{2} (\lambda_3^4 + 6 \lambda_3^2 \lambda_4^2 + \lambda_4^4) -(\lambda_1^2 + \lambda_2^2)(\lambda_3^2 + \lambda_4^2) - 4 \lambda_1 \lambda_2 \lambda_3 \lambda_4 \\
&& = (\lambda_1^2 + \lambda_2^2 + \lambda_3^2 + \lambda_4^2)^2  - 16 \lambda_1 \lambda_2 \lambda_3 \lambda_4 + \frac{1}{2}  [ (\lambda_1 + \lambda_2)^2 - (\lambda_3 + \lambda_4)^2] [ (\lambda_1 - \lambda_2)^2 - (\lambda_3 - \lambda_4)^2] \\
&& = 4c^2 - 2 a - 16b.
\end{eqnarray*}
Compatibility of the third equation of (\ref{3.10}) is checked with Wolfram's Mathematica.	
\end{proof}

Because the coefficients of $Q$ are real-valued, we have three cases to consider:
(i) four roots of $Q$ are real, (ii) two roots of $Q$ are real and one pair of roots is complex-conjugate, and (iii) two pairs of roots of $Q$ are complex-conjugate. Each case is considered separately.

\subsection{Four roots of $Q$ are real}

For simplicity, we order the four real roots of $Q$ as
\begin{equation}
\label{order-roots}
u_4\leq u_3 \leq u_2 \leq u_1.
\end{equation}
Periodic solutions to the quadrature (\ref{3.8}) with (\ref{3.9}) and (\ref{order-roots}) can be expressed explicitly (see, e.g., \cite{CPgardner}) by
\begin{equation}\label{3.14}
\rho(x) = u_4 +
\frac{(u_1-u_4)(u_2-u_4)}{(u_2-u_4)+(u_1-u_2){\rm sn}^2 (\nu x;k)},
\end{equation}
where positive parameters $\nu$ and $k$ are uniquely expressed by
\begin{equation}\label{3.13}
\nu = \frac{1}{2} \sqrt{(u_1-u_3)(u_2-u_4)}, \quad
k = \frac{\sqrt{(u_1-u_2)(u_3-u_4)}}{\sqrt{(u_1-u_3)(u_2-u_4)}}.
\end{equation}
The periodic solution $\rho$ in (\ref{3.14}) is located in the interval $[u_2,u_1]$ and has period
$L = 2 K(k) \nu^{-1}$. The solution is meaningful for 
$\rho = \frac{1}{2} R^2 \geq 0$ if and
only if $u_2 \geq 0$. The four pairs of eigenvalues $\{ \pm \lambda_1, \pm \lambda_2, \pm \lambda_3, \pm \lambda_4 \}$ generate real roots
$\{u_1,u_2,u_3,u_4\}$ of $Q$ by the transformation formula (\ref{3.26}) 
if and only if they satisfy the following three configurations:
(i) they form two complex quadruplets; (ii) they form four pairs of purely imaginary 
eigenvalues; or (iii) they form four pairs of real eigenvalues. Each case is considered separately.

\subsubsection{Two complex quadruplets} Assume that 
the four pairs of eigenvalues
$\{ \pm \lambda_1, \pm \lambda_2, \pm \lambda_3, \pm \lambda_4 \}$
form two complex quadruplets with
\begin{equation}
\label{configuration-1}
\lambda_1 = \bar{\lambda}_2 = \alpha_1 + i \beta_1, \qquad
\lambda_3 = \bar{\lambda}_4 = \alpha_2 + i \beta_2.
\end{equation}
Then the roots ordered as (\ref{order-roots}) satisfy the more precise ordering
\begin{equation}
\label{order-roots-precise}
u_4 \leq u_3 \leq 0 \leq u_2 \leq u_1.
\end{equation}
If $\alpha_1,\alpha_2,\beta_1,\beta_2$ are all positive, so that $\lambda_1$ and $\lambda_3$ are located in the first quadrant, we deduce the explicit expressions
\begin{equation}
\label{3.27}
\left\{ \begin{array}{l}
\alpha_1 = \frac{1}{2\sqrt{2}}(\sqrt{-u_4} + \sqrt{-u_3}), \\
\alpha_2 = \frac{1}{2\sqrt{2}}(\sqrt{-u_4} - \sqrt{-u_3}), \end{array}
\right.
\qquad
\left\{ \begin{array}{l}
\beta_1 = \frac{1}{2\sqrt{2}}(\sqrt{u_1} + \sqrt{u_2}), \\
\beta_2 = \frac{1}{2\sqrt{2}}(\sqrt{u_1} - \sqrt{u_2}), \end{array}
\right.
\end{equation}
so that $\alpha_2 \leq \alpha_1$ and $\beta_2 \leq \beta_1$.

\subsubsection{Four pairs of purely imaginary eigenvalues}
Assume that the four pairs of eigenvalues
$\{ \pm \lambda_1, \pm \lambda_2, \pm \lambda_3, \pm \lambda_4 \}$
are purely imaginary with
\begin{equation}
\label{configuration-2}
\lambda_1 = i \beta_1, \qquad
\lambda_2 = i \beta_2, \qquad
\lambda_3 = i \beta_3, \qquad
\lambda_4 = i \beta_4.
\end{equation}
Then the roots ordered as (\ref{order-roots}) satisfy the more precise ordering
\begin{equation}
\label{order-roots-precise-1}
0 \leq u_4 \leq u_3  \leq u_2 \leq u_1.
\end{equation}
It follows from (\ref{3.26}) that
\begin{equation}
\label{3.27a}
\left\{ \begin{array}{l}
\beta_1 = \frac{1}{2\sqrt{2}}(\sqrt{u_1} + \sqrt{u_2} + \sqrt{u_3} + \sqrt{u_4}), \\
\beta_2 = \frac{1}{2\sqrt{2}}(- \sqrt{u_1} - \sqrt{u_2}
+ \sqrt{u_3} + \sqrt{u_4}), \\
\beta_3 = \frac{1}{2\sqrt{2}}(\sqrt{u_1} - \sqrt{u_2} - \sqrt{u_3}
+ \sqrt{u_4}), \\
\beta_4 = \frac{1}{2\sqrt{2}}(- \sqrt{u_1} + \sqrt{u_2} - \sqrt{u_3}
+ \sqrt{u_4}),\end{array}
\right.
\end{equation}
so that $\beta_2 \leq \beta_4 \leq \beta_3 \leq \beta_1$. 

\begin{remark}
	In the case of the ordering (\ref{order-roots-precise-1}), 
	exchanging $u_1$ with $u_3$ and $u_2$ with $u_4$ 
	generates another periodic solution in the form
	\begin{equation}\label{3.14a}
	\rho(x) = u_2 -
	\frac{(u_2-u_3)(u_2-u_4)}{(u_2-u_4)-(u_3-u_4){\rm sn}^2 (\nu x;k)},
	\end{equation}
	with the same values of parameters $\nu$ and $k$ in (\ref{3.13}). 
	The periodic solution $\rho$ in (\ref{3.14a}) is located in the interval 
	$[u_4,u_3]$ and has the same period $L = 2 K(k) \nu^{-1}$.	
\end{remark}

\begin{remark}
	The ordering of $\beta_{1,2,3,4}$ in (\ref{3.27a}) corresponds to the transformation (\ref{3.26}) but is different from the ordering used in Proposition \ref{prop-stability} for the spectral gaps in (\ref{stab-2}).
\end{remark}

\subsubsection{Four pairs of real eigenvalues}

Assume that all pairs of eigenvalues
$\{ \pm \lambda_1, \pm \lambda_2, \pm \lambda_3, \pm \lambda_4 \}$
are real. Then the roots satisfy
\begin{equation}
\label{order-roots-precise-2}
u_4 \leq u_3 \leq u_2 \leq u_1 \leq 0.
\end{equation}
The solution (\ref{3.14}) is not meaningful because 
$\rho = \frac{1}{2} R^2 \leq 0$.

\subsection{Case: two roots of $Q$ are real and one pair of roots is complex-conjugate}

Let $u_{1,2}$ be real roots ordered as $u_2\leq u_1$ and
$u_{3,4} = \gamma\pm i\eta$ be complex-conjugate  roots with
\begin{equation}
\label{order-roots-complex}
u_2 \leq u_1, \quad u_3 = \gamma + i \eta, \quad u_4= \gamma - i \eta.
\end{equation}
Periodic solutions
to the quadrature (\ref{3.8}) with (\ref{3.9}) and (\ref{order-roots-complex})
can be expressed explicitly (see, e.g., \cite{CPgardner}) by
\begin{equation}\label{3.20}
\rho(x)= u_1 + \frac{(u_2-u_1)(1-{\rm cn} (\mu x;k))}{1+\delta +(\delta-1){\rm cn} (\mu x;k)},
\end{equation}
where positive parameters $\delta$, $\mu$, and $k$ are uniquely expressed by
\begin{equation}\label{3.18}
\delta = \frac{\sqrt{(u_2-\gamma)^2 + \eta^2}}{\sqrt{(u_1-\gamma)^2 + \eta^2}},\qquad
\mu = \sqrt[4]{\left[(u_1 - \gamma)^2+\eta^2 \right]\left[(u_2- \gamma)^2+\eta^2 \right]},
\end{equation}
and
\begin{equation}
2 k^2 = 1- \frac{(u_1 - \gamma)(u_2 - \gamma)+ \eta^2}{\sqrt{\left[(u_1 - \gamma)^2+\eta^2 \right]\left[(u_2- \gamma)^2+\eta^2 \right]}}.
\end{equation}
The periodic solution $\rho$ in (\ref{3.20}) is located in the interval $[u_2,u_1]$ and has period $L = 4 K(k) \mu^{-1}$. The solution is meaningful for $\rho = \frac{1}{2} R^2 \geq 0$ if and only if $u_2 \geq 0$. The four pairs of eigenvalues $\{ \pm \lambda_1, \pm \lambda_2, \pm \lambda_3, \pm \lambda_4 \}$ generate the two roots and one pair of complex-conjugate roots $\{u_1,u_2,u_3,u_4\}$ of $Q$ by the transformation formula (\ref{3.26}) 
if and only if they satisfy the following two configurations:
(i) they form one complex quadruplet and two pairs of purely imaginary eigenvalues; or (ii) they form one complex quadruplet and two pairs of real eigenvalues. Each case is considered separately.

\subsubsection{One complex quadruple and two pairs of purely imaginary eigenvalues} Assume that the four pairs of eigenvalues
$\{ \pm \lambda_1, \pm \lambda_2, \pm \lambda_3, \pm \lambda_4 \}$
form one quadruplet $\{ \pm \lambda_1, \pm \bar{\lambda}_1\}$ of complex eigenvalues
and two pairs $\{ \pm \lambda_3, \pm \lambda_4 \}$ of purely imaginary eigenvalues. Then, we have $0 \leq u_2 \leq u_1$ and $u_3 = \bar{u}_4$.
By writing
\begin{equation}
\label{configuration-3}
\lambda_1 = \bar{\lambda}_2 = \alpha_1 + i \beta_1, \qquad
\lambda_3 = i \beta_3, \qquad
\lambda_4 = i \beta_4.
\end{equation}
with positive $\alpha_1$ and $\beta_1$, we deduce the explicit expressions
\begin{equation}
\label{a3.16}
\left\{ \begin{array}{l}
\alpha_1 = \frac{1}{2} \sqrt{\sqrt{\gamma^2 + \eta^2}-\gamma}, \\
\beta_1 = \frac{1}{2\sqrt{2}}(\sqrt{u_1} + \sqrt{u_2}), \end{array}
\right.
\qquad
\left\{ \begin{array}{l}
\beta_3 = \frac{\eta}{2\sqrt{\sqrt{\gamma^2 + \eta^2}-\gamma}} + \frac{1}{2\sqrt{2}}(\sqrt{u_1} - \sqrt{u_2}), \\
\beta_4 = \frac{\eta}{2\sqrt{\sqrt{\gamma^2 + \eta^2}-\gamma}} - \frac{1}{2\sqrt{2}}(\sqrt{u_1} - \sqrt{u_2}),
\end{array}
\right.
\end{equation}
so that $\beta_4 \leq \beta_3$. 

\subsubsection{One complex quadruple and two pairs of real eigenvalues}

Assume that the four pairs of eigenvalues form one quadruplet $\{ \pm \lambda_1, \pm \bar{\lambda}_1\}$ of complex eigenvalues
and two pairs $\{ \pm \lambda_3, \pm \lambda_4 \}$ of real eigenvalues. Then $u_1 = \bar{u}_2$ and $u_4 \leq u_3 \leq 0$. The solution (\ref{3.14}) is not meaningful because $\rho = \frac{1}{2} R^2 \leq 0$.

\subsection{Case: two pairs of roots of $Q$ are complex-conjugate}

In the case of no real roots of $Q$, we have $Q(\rho) > 0$ for every $\rho \in \mathbb{R}$. There exist no periodic wave solutions to the quadrature (\ref{3.8}) with $Q(\rho) > 0$ 
in the space of real functions for $\rho$. Hence, this case
does not result in the periodic wave solutions.

\vspace{0.2cm}

Outcomes of the classification of the periodic standing waves 
in the DNLS equation (\ref{dnls}) are summarized as follows.

\vspace{0.2cm}

\centerline{\fbox{\parbox[cs]{1,0\textwidth}{
There exists exactly two families of periodic standing waves expressed by either (\ref{3.14}) or (\ref{3.20}) for $\rho = \frac{1}{2} R^2$. The family (\ref{3.14}) is related to either two complex quadruplets in the case 
(\ref{order-roots-precise}) or four pairs of purely imaginary 
eigenvalues in the case (\ref{order-roots-precise-1}). 
The family (\ref{3.20}) is related to one complex quadruplet and 
two pairs of purely imaginary eigenvalues in the case (\ref{order-roots-complex}).}}}

\section{Periodic standing waves in the case of $a = 0$}
\label{sec:4}

The family of periodic standing waves $u(x) = R(x) e^{i \Theta(x)}$ can be made explicit
in the case $a = 0$. This case for the NLS equation is referred to as the waves of
trivial phase in \cite{DS} (see also \cite{CPW}). For the DNLS equation,
the phase is still nontrivial for $a = 0$ due to the dependence of $\Theta$ from $R^2$ in (\ref{3.5}). The case of $a = 0$ was the only case of periodic standing wave solutions of the DNLS equation considered in \cite{Hakaev}.

It follows from (\ref{3.6}) with $a = 0$ that the amplitude function $R$ satisfies the quadrature
\begin{equation}\label{R-quadrature}
\left(\frac{d R}{d x}\right)^2 + F(R) = 4d,
\end{equation}
where
\begin{equation}
\label{R-roots}
F(R) = \frac{1}{16} R^6 + \frac{c}{2} R^4
+ (c^2 - 4b) R^2.
\end{equation}
There is no singularity at $R = 0$ if $a = 0$.

\begin{figure}[hb!]
	\includegraphics[width=0.48\textwidth]{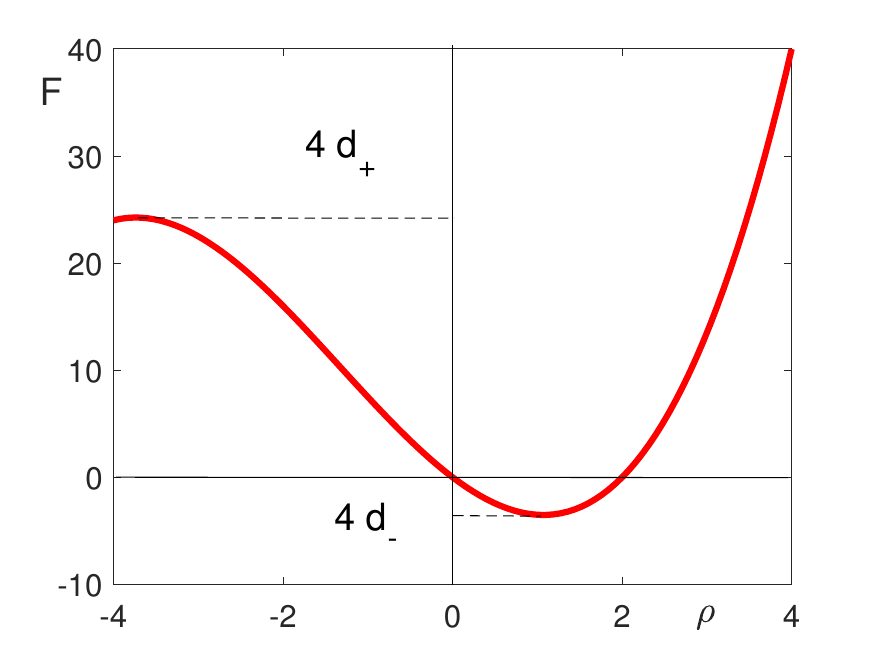}
	\includegraphics[width=0.48\textwidth]{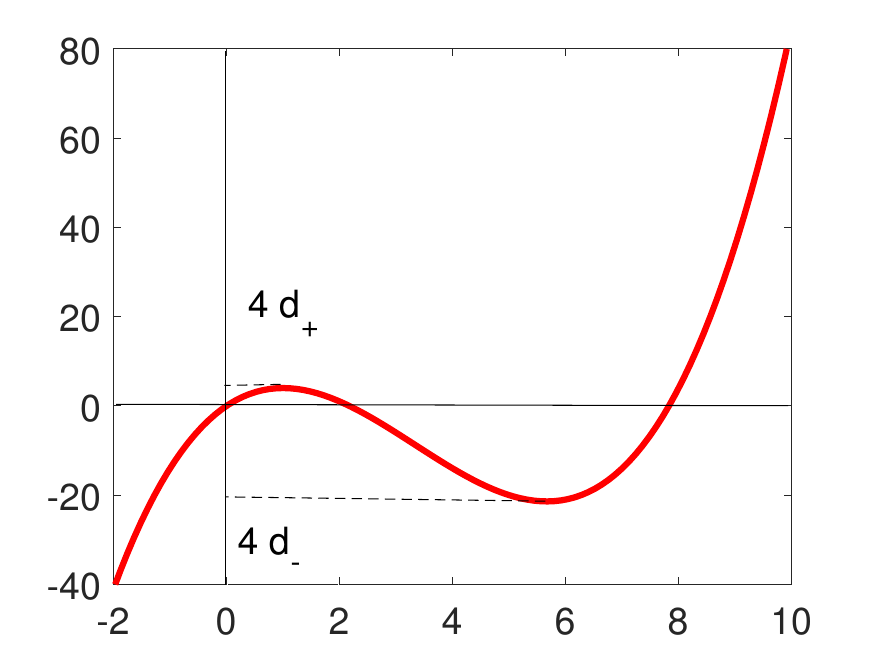}
	\caption{The graph of $F$ versus $\rho$ in (\ref{R-roots-rho}) for $c^2 < 4b$ (left) and for $c^2 > 4b$, $c < 0$, and $b > 0$ (right).} \label{figF}
\end{figure}

\begin{figure}[hb!]
	\includegraphics[width=0.48\textwidth]{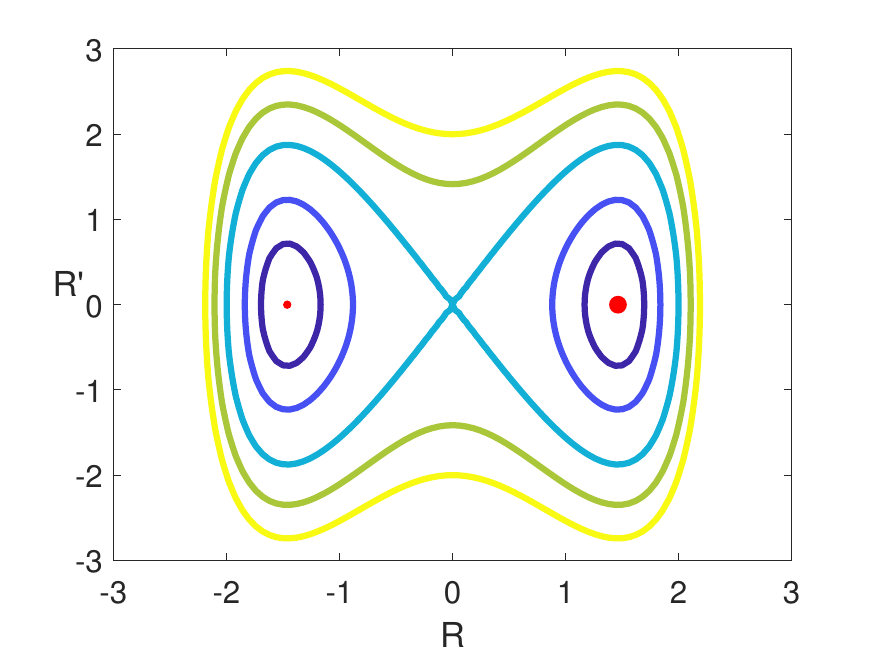}
	\includegraphics[width=0.48\textwidth]{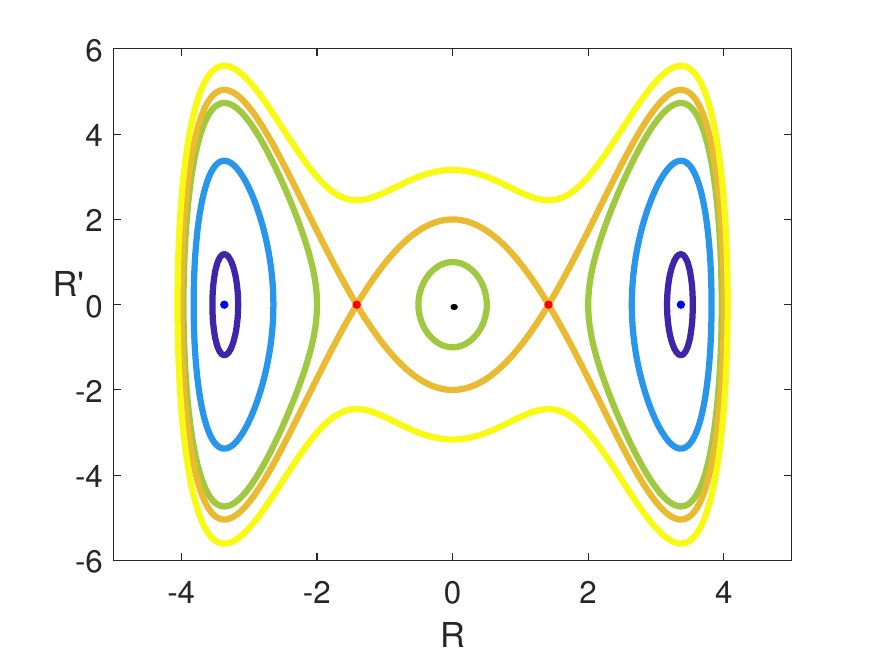}
	\caption{Phase portrait in the phase plane $(R,R')$ for $c^2 < 4b$ (left) and for $c^2 > 4b$, $c < 0$, and $b > 0$ (right).} \label{figPlane}
\end{figure}

Introducing again $\rho := \frac{1}{2} R^2$ and abusing notations for $F$, we can rewrite (\ref{R-roots}) in the form
\begin{equation}
\label{R-roots-rho}
F(\rho) = \frac{1}{2} \rho^3 + 2c \rho^2 + 2 (c^2 - 4b) \rho.
\end{equation}
One root of the cubic polynomial $F(\rho)$ is at zero and the other two roots are given by 
\begin{equation}
\label{rho-roots}
\rho_{\pm} = -2c \pm 4 \sqrt{b}. 
\end{equation}
The graph of $F$ versus $\rho$ is
shown in Fig. \ref{figF} for $c^2 < 4b$ (left) and for $c^2 > 4b$, $c < 0$, and $b > 0$ (right). These two cases correspond to $\rho_- < 0 < \rho_+$ and $0 < \rho_- < \rho_+$, respectively. The other cases of $c^2 > 4b$ and $c > 0$ 
or $c^2 > 4b$, $c < 0$, and $b < 0$ 
correspond to either $\rho_- < \rho_+< 0$ or complex-conjugate $\rho_{\pm}$, so that $F(\rho) > 0$ for $\rho > 0$.

If $c^2 < 4b$, sign-definite periodic solutions exist for $d \in (d_-,0)$,
where $d_- := \frac{1}{4} \min\limits_{\rho \in [0,\infty)} F(\rho)$ (see the left panel of Fig. \ref{figF}).
As $d \to d_-$, the family of periodic solutions degenerates
to the constant-amplitude solution. As $d \to 0$, the family of periodic solutions approaches the solitary wave satisfying $R(x) \to 0$
as $|x| \to \infty$, which corresponds to the exact solution 
(\ref{soliton}).
Sign-indefinite periodic solutions exist 
for $d \in (0,d_+)$ and $d \in (d_+,\infty)$, where 
$d_+ := \frac{1}{4} \max\limits_{\rho \in (-\infty,0]} F(\rho)$. It should be noted that $R(x)$ in the quadrature (\ref{R-quadrature}) is allowed to be negative but both positive and negative values of $R$ correspond to positive values of $\rho = \frac{1}{2} R^2$. Also note that the local maximum point $d_+$ occurring for $\rho \in (-\infty,0)$ affects the analytical representation of the periodic solutions by either (\ref{3.14}) or (\ref{3.20}) but does not change the qualitative 
behavior of $R$. A phase portrait for the quadrature (\ref{R-quadrature}) with $c^2 < 4b$ is shown on the phase plane $(R,R')$ in Fig. \ref{figPlane} (left).

If $c^2 > 4b$ and either $c > 0$ or $c < 0$ and $b < 0$,
then $F(\rho) > 0$ for $\rho > 0$. Sign-indefinite periodic solutions exist for every $d > 0$ but no other bounded
periodic solutions exist. These solutions are very similar to those for $c^2 < 4b$ 
with $d \in (0,\infty)$, therefore, we will not consider examples 
of such periodic solutions for these parameter ranges.

If $c^2 > 4b$, $c < 0$, and $b > 0$,
sign-definite periodic solutions exist for $d \in (d_-,d_+)$, and sign-indefinite periodic solutions exist for $d \in (0,d_+)$ and $d \in (d_+,\infty)$ (see the right panel of Fig. \ref{figF}).
When $d \to d_-$, the family of sign-definite periodic solutions degenerates
to the constant-amplitude solution. As $d \to d_+$, the family of sign-definite periodic solutions approaches the solitary wave satisfying $R(x) \to R_0$ 
as $|x| \to \infty$ with $R_0 > 0$ being the root of $F(R) = 4d$ with $F(R)$ given by (\ref{R-roots}). The family of sign-indefinite periodic solutions approaches the kink solution 
satisfying $R(x) \to \pm R_0$ as $x \to \pm \infty$. Phase portrait for the quadrature (\ref{R-quadrature}) with $c^2 > 4b$, $c < 0$, and $b > 0$ is shown on the phase plane $(R,R')$ in Fig. \ref{figPlane} (right).

If $a = 0$, one root of $Q$ in (\ref{3.9}) is zero. The other three roots are given by the intersection of the graph of $F(\rho)$ given by (\ref{R-roots-rho}) with the constant level $4d$. In the remainder of this section we study the two cases (i) $c^2<4b$ and (ii) $c^2 > 4b$, $c<0$, and $b > 0$. In each case we give exact analytical expressions for the periodic wave solutions and create representative figures of the Floquet spectrum in the KN spectral problem (\ref{lax-1-intro}) using the numerical Hill's method \cite{FFHM,JohnZumb}. 
The connection formula $\Lambda = \pm 4 i \sqrt{P(\lambda)}$ allows us to study the modulational stability or instability of the periodic standing waves in the spectral problem (\ref{dnls-spectral}).

\subsection{Case: $c^2 < 4b$}

If $d \in (d_-,0)$ (see the left panel of Fig. \ref{figF}), then the roots of $Q$ are all real and ordered as 
$$
u_4 < 0 = u_3 < u_2 < u_1.
$$
The exact analytical expression for the periodic wave solutions is given by  (\ref{3.14}) for $\rho$ in $[u_2,u_1]$. The period of the periodic wave is 
$L = 2 K(k) \nu^{-1}$. The roots of $P(\lambda)$ in (\ref{3.22}) form two quadruplets of complex-conjugate eigenvalues in (\ref{configuration-1}) with
$\alpha_1 = \alpha_2$ in (\ref{3.27}).

\begin{figure}[htb!]
	\centering
	\begin{subfigure}{\textwidth}
	$\phantom{txt}$\includegraphics[width=5cm,height = 3.5cm]{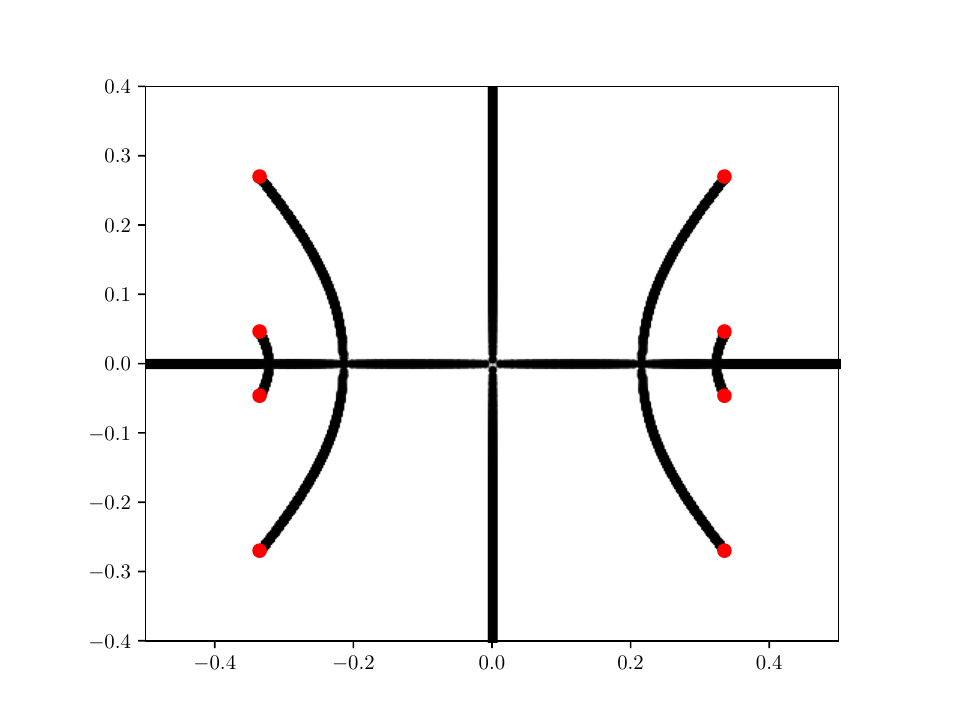}
	\hfil
	\includegraphics[width=6cm,height = 3.5cm]{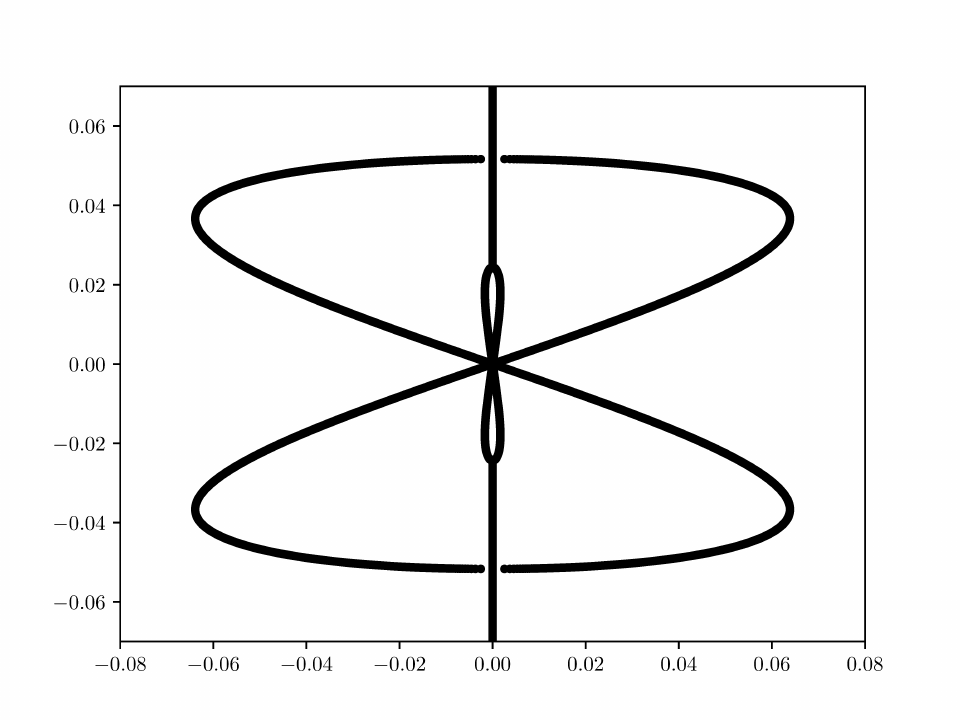}
		\caption{$u_1 = 0.2,$ $u_2 = 0.1$, $u_3 = 0$, $u_4 = -0.9$.} 
	\end{subfigure}
	\begin{subfigure}{\textwidth}
	\includegraphics[width=6cm,height = 3.5cm]{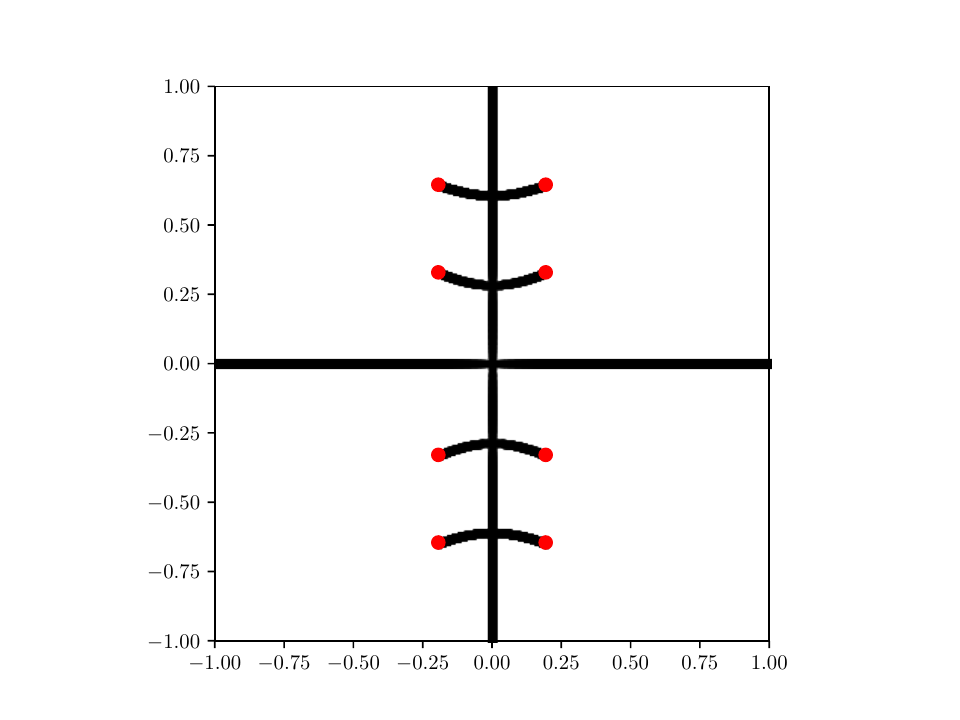}\hfil
	\includegraphics[width=6cm,height = 3.5cm]{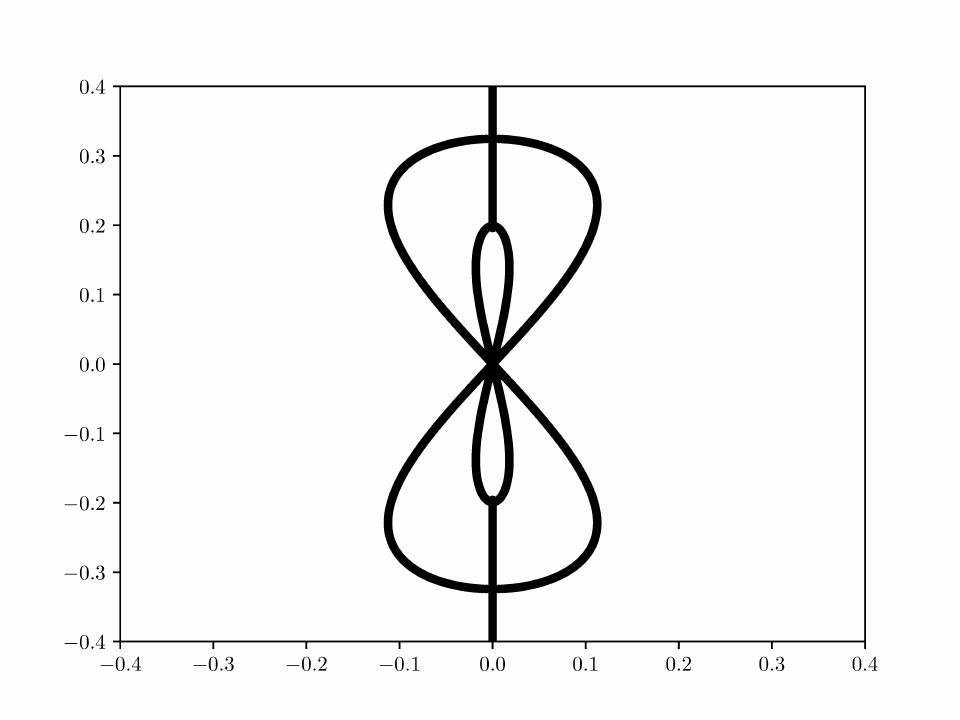} 
		\caption{$u_1 = 1.9,$ $u_2 = 0.2$, $u_3 = 0$, $u_4 = -0.3$. }
	\end{subfigure}
	\begin{subfigure}{\textwidth}
	\includegraphics[width=6cm,height = 3.5cm]{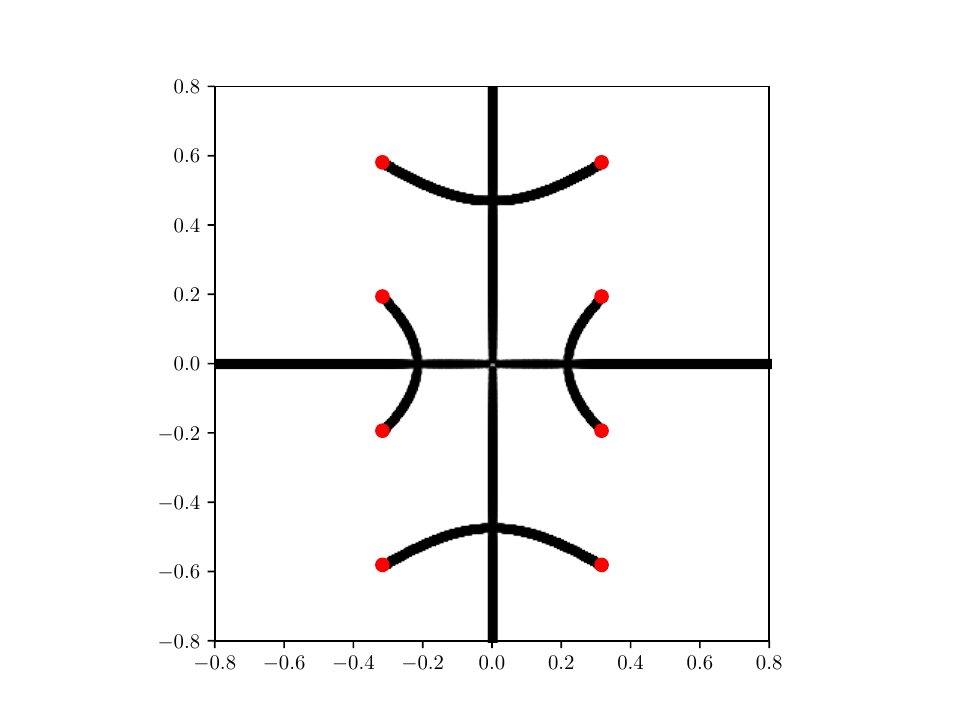}\hfil
	\includegraphics[width=6cm,height = 3.5cm]{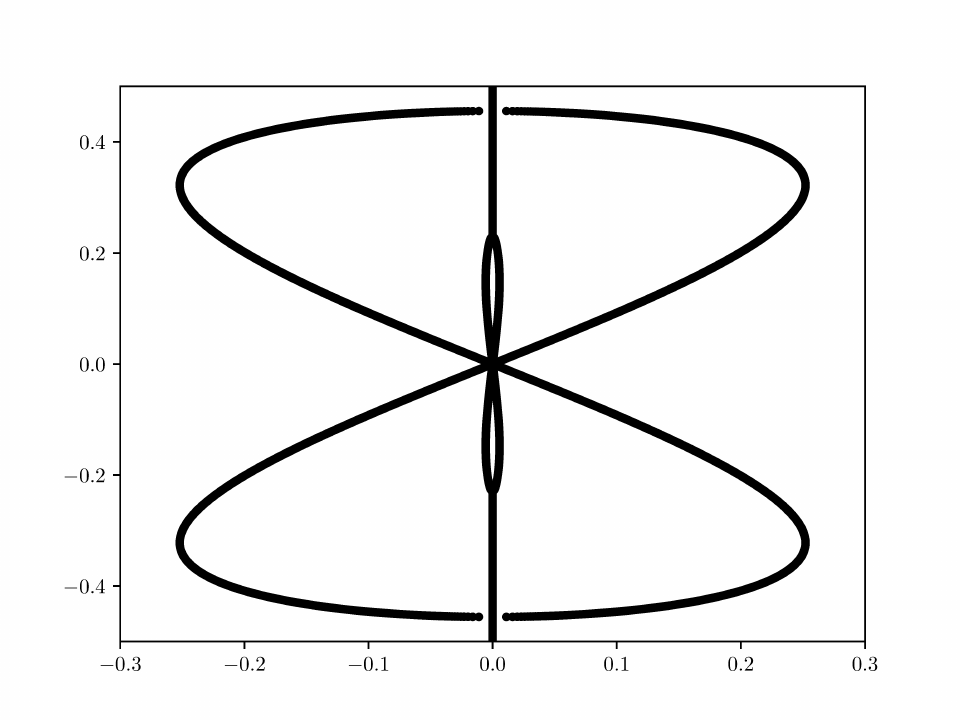} 
		\caption{$u_1 = 1.2,$ $u_2 = 0.3$, $u_3 = 0$, $u_4 = -0.8$. }
	\end{subfigure}
	\begin{subfigure}{\textwidth}
	\includegraphics[width=6cm,height = 3.5cm]{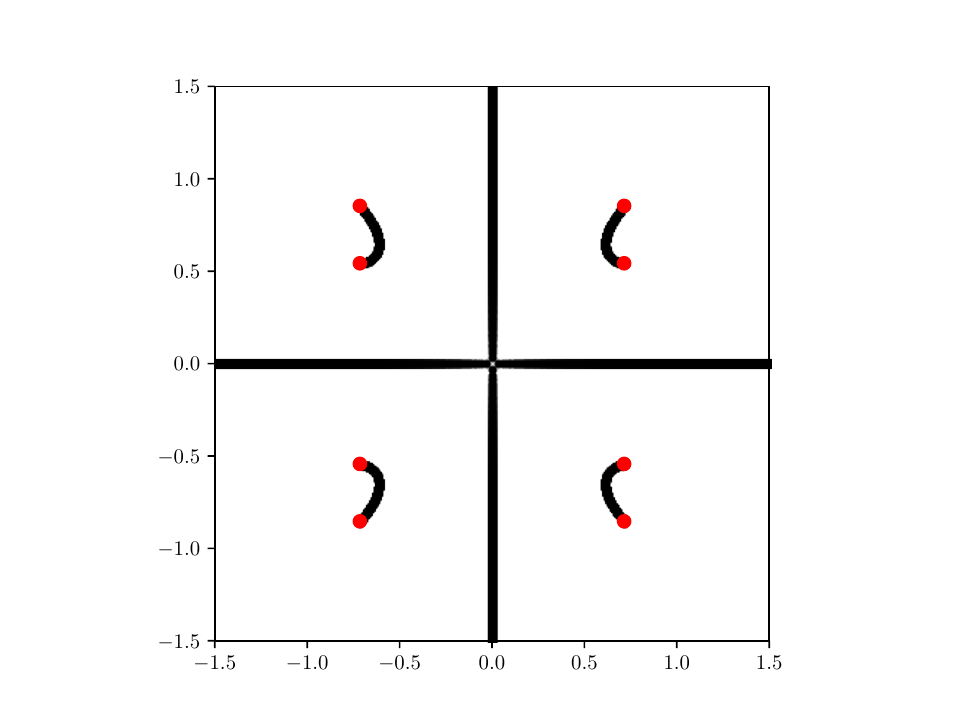}\hfil
	\includegraphics[width=6cm,height = 3.5cm]{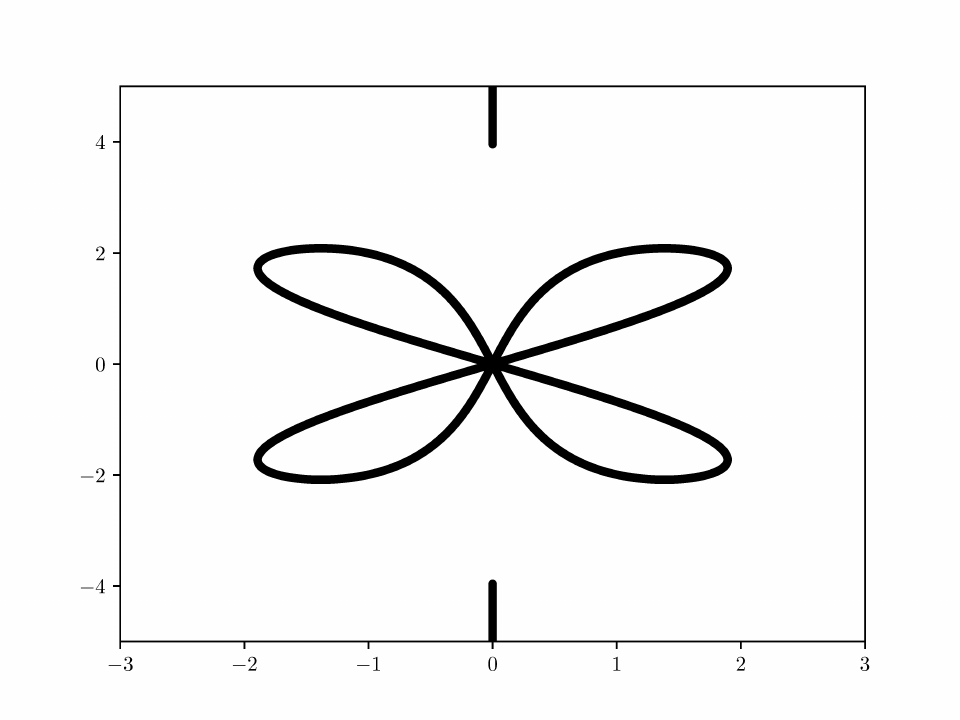}
		\caption{$u_1 = 3.9,$ $u_2 = 0.193012$, $u_3 = 0$, $u_4 = -4.090301$. }
	\end{subfigure}
	\caption{Numerical computations of the Floquet spectrum in the KN spectral problem (left) and stability spectrum (right) on the complex $\lambda$ and $\Lambda$ planes (Re. \textit{vs.} Im.) for the four representative cases found for $c^2<4b$ and $d \in (d_-,0)$. Eight red dots on the left panel represent roots of the polynomial $P(\lambda)$.
	\label{figure1-4}}
\end{figure}

\begin{figure}[htb!]
	\begin{center}
		\begin{subfigure}{\textwidth}
			\includegraphics[width=6cm,height=3.8cm]{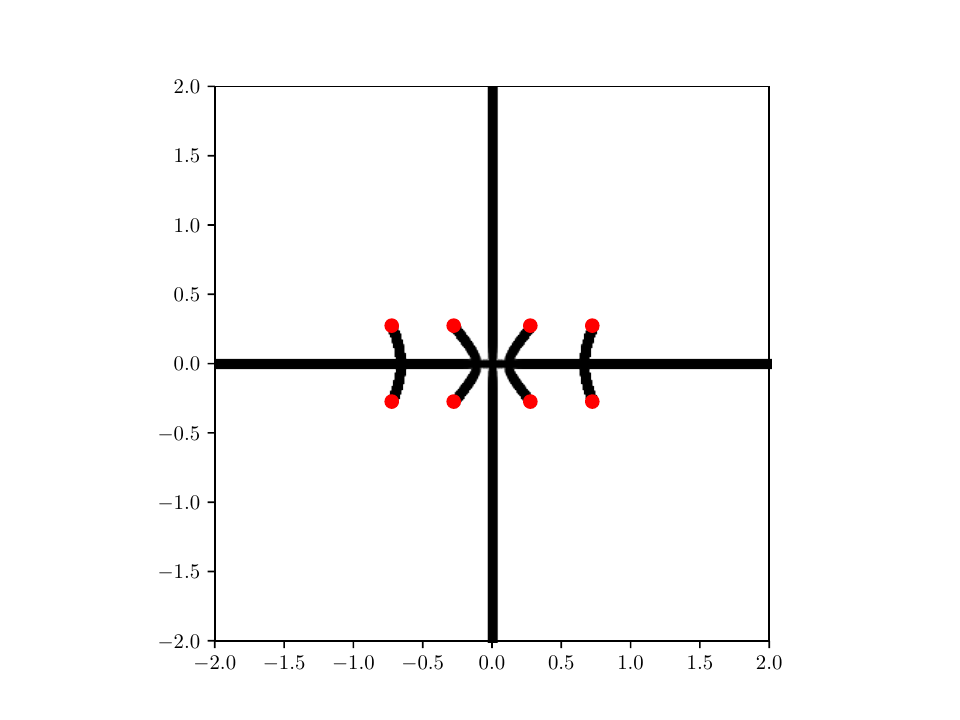}\hfil
			\includegraphics[width=6cm,height=3.8cm]{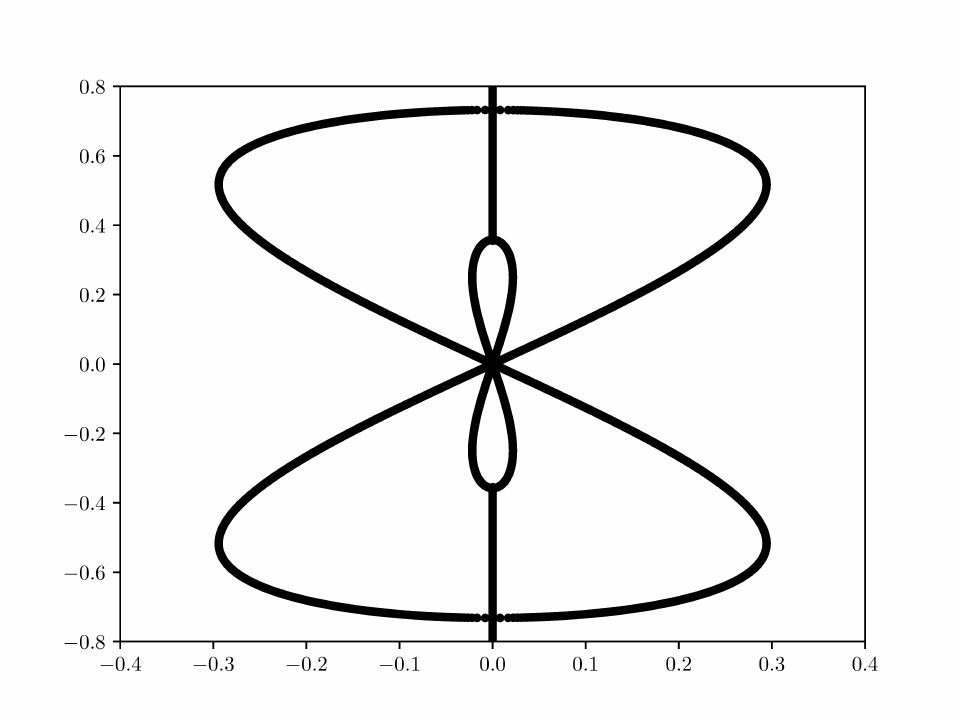} 
			\caption{$u_1 = 0.6$, $u_2 = 0,$ $u_3 = -0.4$, $u_4 = -2$.} 
		\end{subfigure}
		\begin{subfigure}{\textwidth}
			\includegraphics[width=6cm,height=3.8cm]{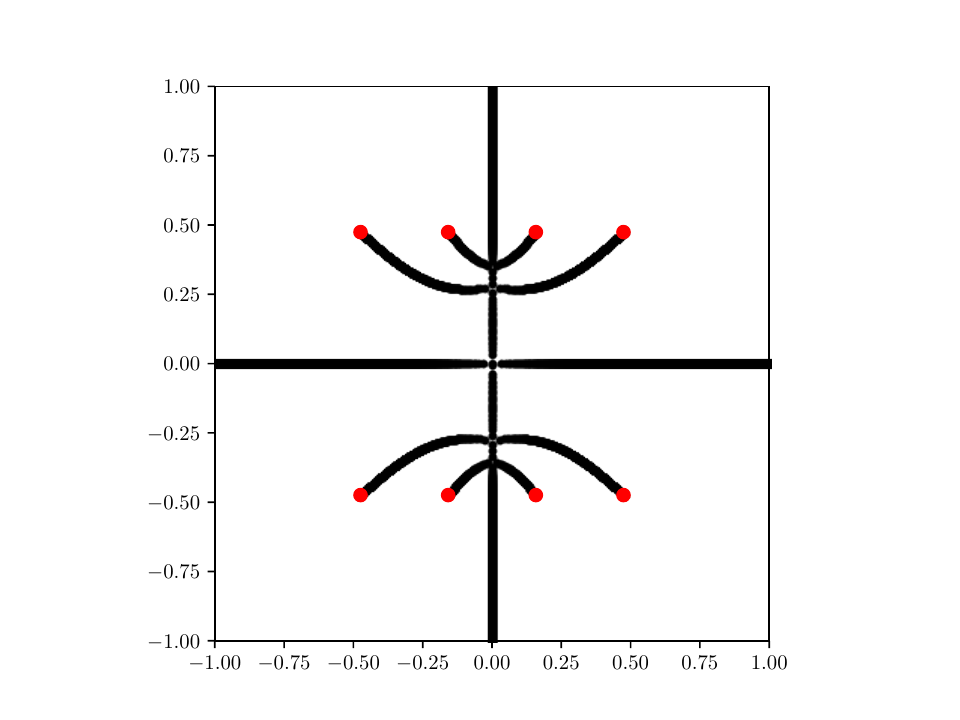}\hfil
			\includegraphics[width=6cm,height=3.8cm]{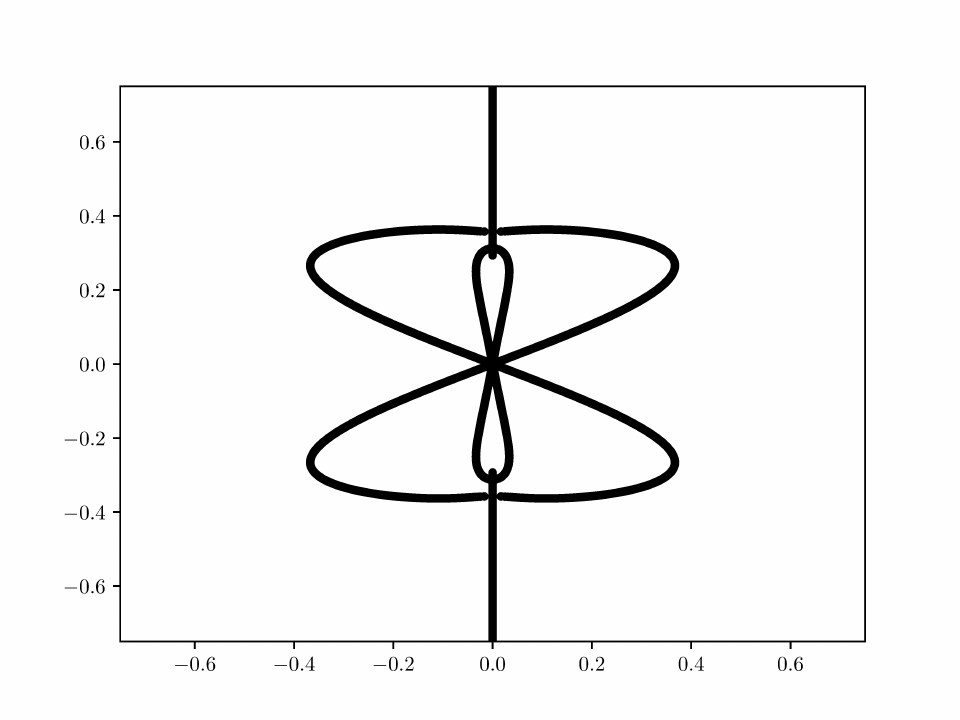} 
			\caption{$u_1 = 1.8$, $u_2 = 0,$ $u_3 = -0.2$, $u_4 = -0.8$.} 
		\end{subfigure}
		\begin{subfigure}{\textwidth}
			\includegraphics[width=6cm,height=3.8cm]{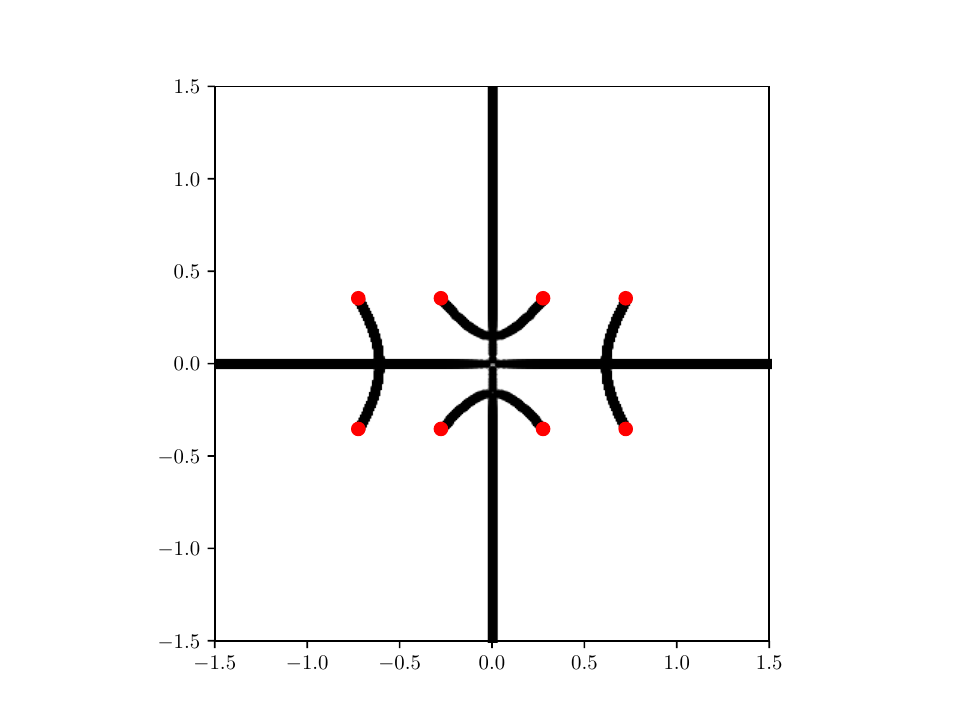}\hfil
			\includegraphics[width=6cm,height=3.8cm]{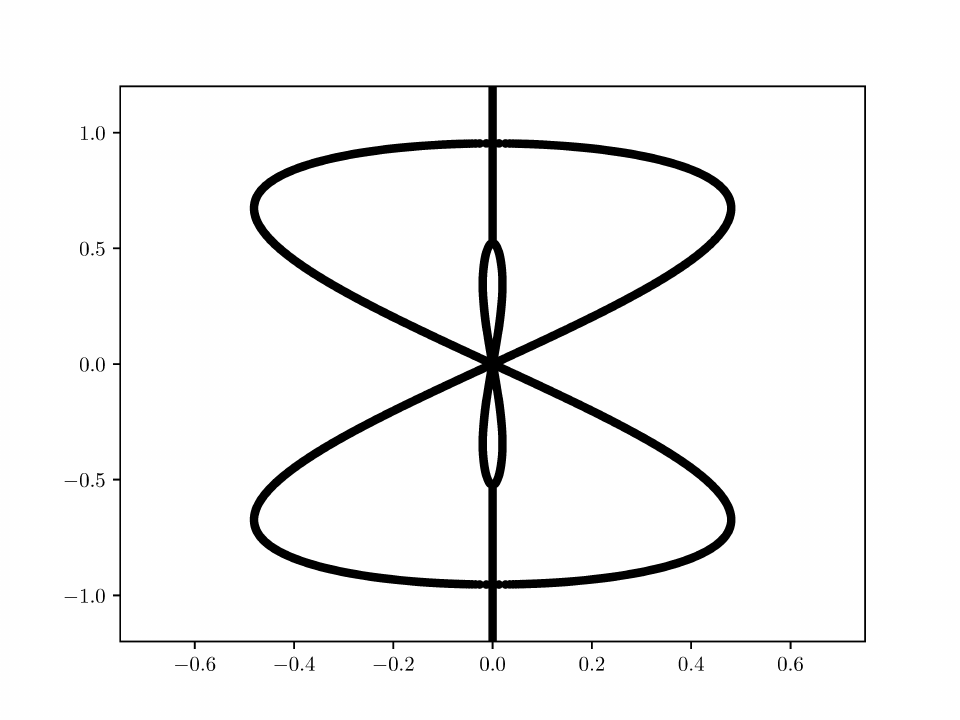}
			\caption{$u_1 = 1$, $u_2 = 0,$ $u_3 = -0.4$, $u_4 = -2$.} 
		\end{subfigure}
		\begin{subfigure}{\textwidth}
			\includegraphics[width=6cm,height=3.8cm]{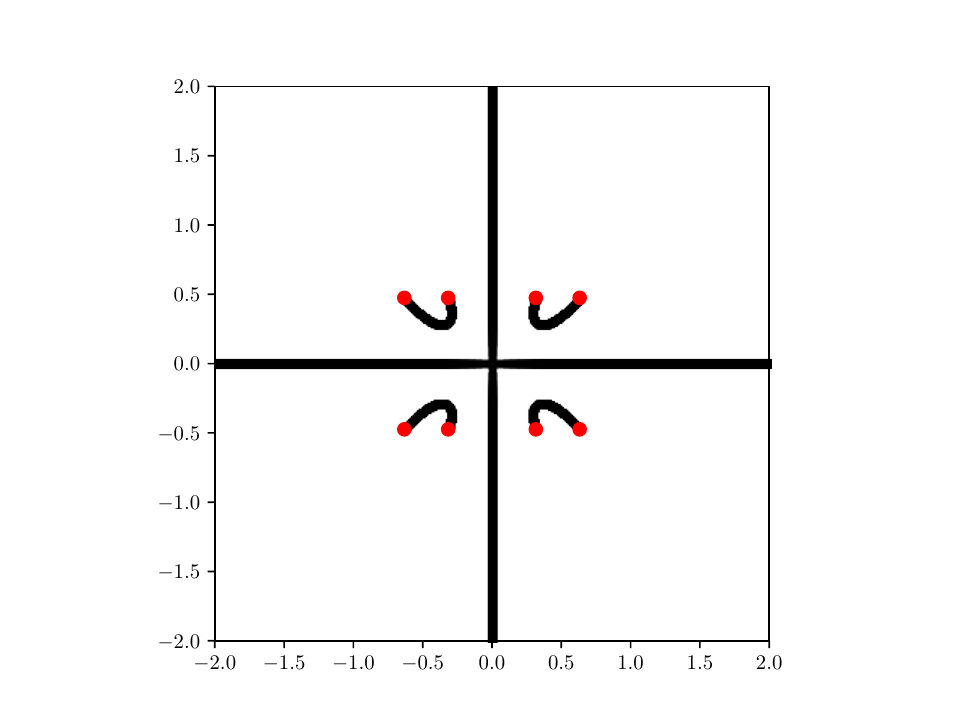}\hfil
			\includegraphics[width=6cm,height=3.8cm]{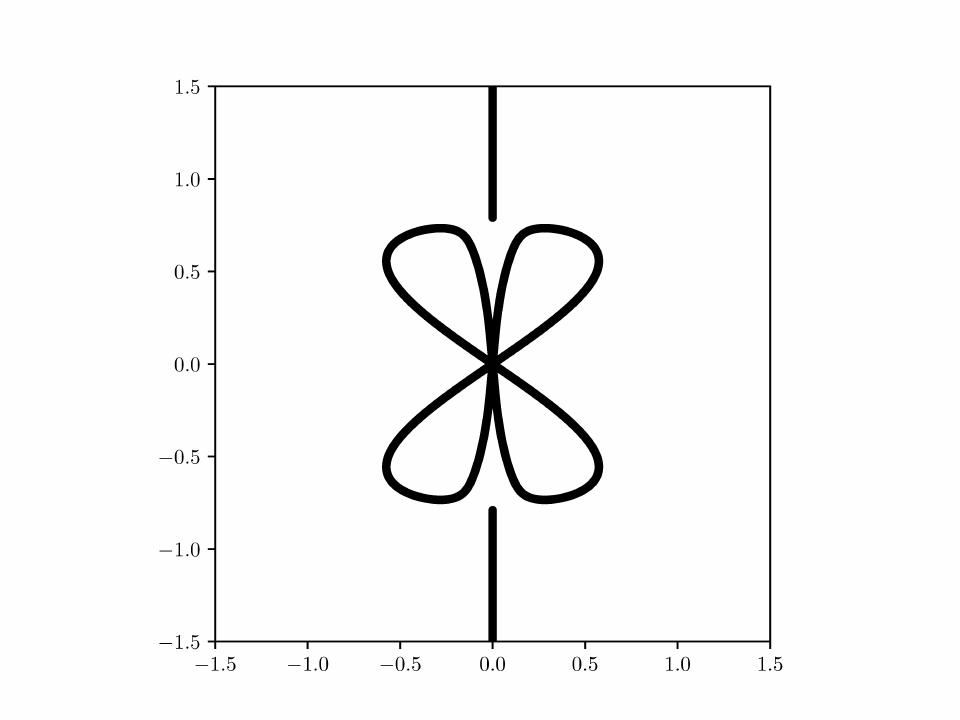}
			\caption{$u_1 = 1.8$, $u_2 = 0,$ $u_3 = -0.2$, $u_4 = -1.8$.} 
		\end{subfigure}
		\caption{The same as Figure \ref{figure1-4} but for $c^2<4b$ and $d \in (0, d_+)$.
			\label{figure5-8}}
	\end{center}
\end{figure}

As $d \to d_-$, we have $u_2 \to u_1$ and $\beta_2 \to 0$ in (\ref{3.27}),
hence one quadruplet persists in the limit but the other
coalesces on the real axis. This corresponds to the constant-amplitude wave. As $d \to 0$, we have $u_2 \to u_3 = 0$ and $\beta_2 \to \beta_1$ in (\ref{3.27}), hence two quadruplets coalesce in the complex plane 
to a single quadruplet. This corresponds to the solitary wave (\ref{soliton}). 

For $d \in (d_-,0)$, we find four typical configurations for the Floquet spectrum of the KN spectral problem (\ref{lax-1-intro}) shown in Figure \ref{figure1-4} (left): (a) with four spectral bands intersecting the real axis, (b) with four spectral bands intersecting the imaginary axis, (c) with two spectral bands intersecting the real axis and two spectral bands intersecting the imaginary axis, and (d) with no  intersections with the real or imaginary axes. The first three cases lead to a double figure-8 of the stability spectrum 
shown in Fig. \ref{figure1-4} (right): one figure-8 is embedded within another. The last case leads to a butterfly figure. Both figure-8 and the butterfly figure were observed for the periodic standing waves of nontrivial phase in the NLS equation \cite{DS}.  Each case leads to the modulational instability of the periodic standing waves according to Definition \ref{def-stab}.

If $d \in (0,d_+)$,
the roots of $Q(\rho)$ are all real and  ordered by
$$
 u_4 < u_3 < 0 = u_2 < u_1.
$$
The exact analytical expression for the periodic wave solutions is given by  (\ref{3.14}) for $\rho$ in $[0,u_1]$. However, the case when $\rho(x)$ may vanish corresponds to the case of the sign-indefinite $R(x)$. 
If $u_2 = 0$ is used in the expression (\ref{3.14}),
the expression can be written as
\begin{equation}\label{3.14-simple}
\rho(x) =
\frac{u_1 {\rm cn}^2(\nu x;k)}{1 + |u_4|^{-1} u_1 {\rm sn}^2 (\nu x;k)}.
\end{equation}
Extracting the square root analytically yields the exact 
expression for the periodic wave solutions,
\begin{equation}\label{3.14-R}
R(x) =
\frac{\sqrt{2u_1} {\rm cn}(\nu x;k)}{\sqrt{1 + |u_4|^{-1} u_1 {\rm sn}^2 (\nu x;k)}}.
\end{equation}
The period of the periodic wave is now $L = 4 K(k) \nu^{-1}$ (which is double compared to the case of sign-definite solutions).
The roots of $P(\lambda)$ in (\ref{3.22}) form two quadruplets
of complex-conjugate eigenvalues in (\ref{configuration-1}) with
$\beta_1 = \beta_2$ in (\ref{3.27}).

For $d \in (0,d_+)$, we find four typical configurations for the Floquet spectrum in the $\lambda$ and $\Lambda$ planes shown in Figure \ref{figure5-8}. Each case is similar to one in Fig. \ref{figure1-4}.

As $d \to d_+$, we have $u_3 \to u_4$ and $\alpha_2 \to 0$ in (\ref{3.27}),
hence one quadruplet persists in the limit but the other 
coalesces on the imaginary axis. The sign-indefinite periodic solution
continues for $d \in (d_+,\infty)$ but now corresponds
to the case of two real roots of $Q(\rho)$ with $0 = u_2 < u_1$
and a pair of complex-conjugate roots $u_{3,4} = \gamma \pm i \eta$.
The exact analytical expression for the periodic wave is given by (\ref{3.20}) for $\rho$ in $[0,u_1]$. Again, the case when $\rho(x)$ may vanish corresponds to the case of the sign-indefinite $R(x)$. 
If $u_2 = 0$ is used in the expression (\ref{3.20}),
the expression can be written as
\begin{equation}\label{3.20-simple}
\rho(x) = u_1 \delta \frac{1+{\rm cn} (\mu x;k)}{1+\delta +(\delta-1){\rm cn} (\mu x;k)}.
\end{equation}
Extracting the square root analytically yields the exact 
expression for the periodic wave solutions, 
\begin{equation}\label{3.20-R}
R(x) = \frac{\sqrt{2u_1 \delta} {\rm cn} (\frac 12 \mu x;k)} {\sqrt{\delta {\rm cn}^2 (\frac 12 \mu x;k) +
{\rm sn}^2 (\frac 12 \mu x;k) {\rm dn}^2 (\frac 12 \mu x;k)}}.
\end{equation}
The period of the periodic wave is  $L = 8 K(k) \mu^{-1}$ (double compared to the case of sign-definite solutions).
The roots of $P(\lambda)$ in (\ref{3.22}) form one quadruplet
of complex-conjugate eigenvalues $(\pm \lambda_1,\pm \bar{\lambda}_1)$ and
two pairs of purely imaginary eigenvalues $(\pm i \beta_3,\pm i \beta_4)$ as in (\ref{configuration-3}) and (\ref{a3.16}).

\begin{figure}[hb!]
	\centering
	\begin{subfigure}[b]{\textwidth}
	\includegraphics[width=6cm,height = 4cm]{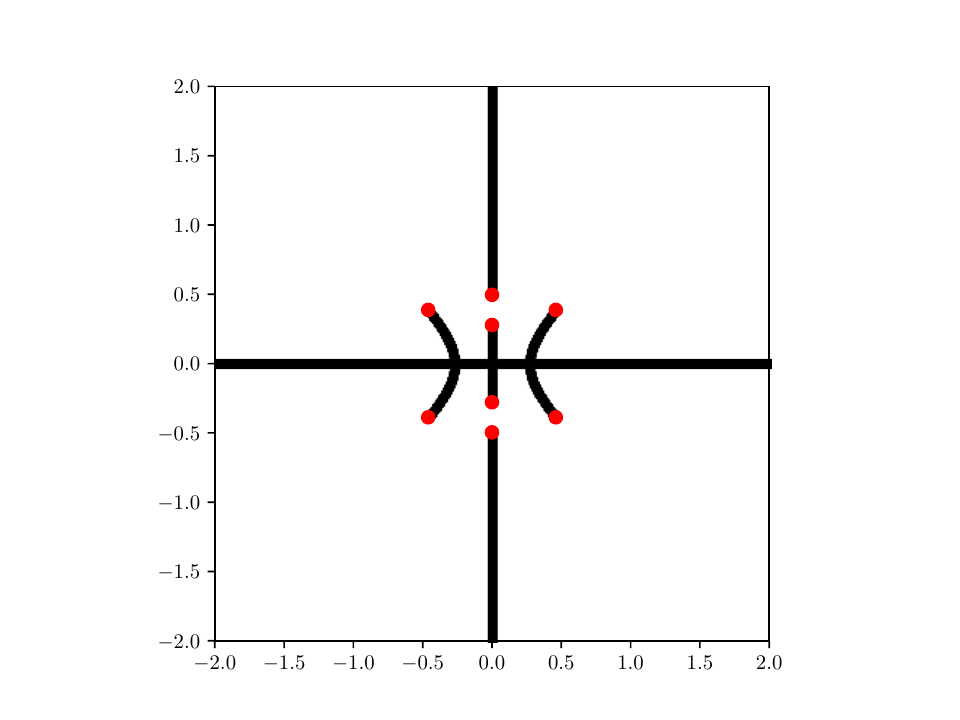}\hfil
	\includegraphics[width=6cm,height = 4cm]{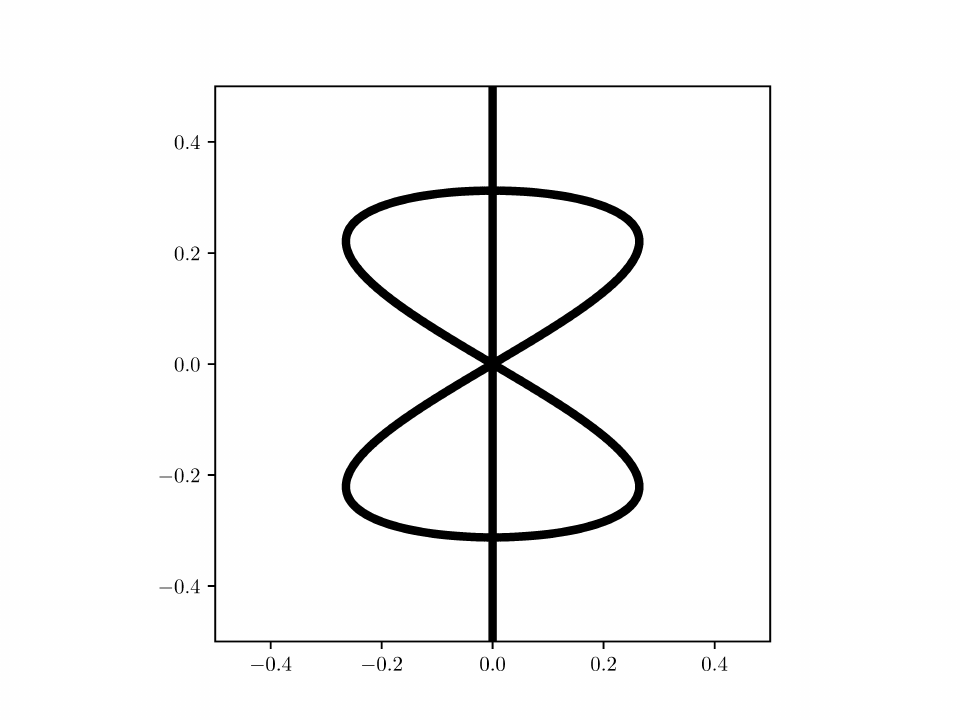} 
	\caption{$u_1 = 1.2$, $u_2 = 0$, $u_3 = -0.4-0.2i$, $u_4 = -0.4+0.2i$.}
	\end{subfigure}
	\begin{subfigure}[b]{\textwidth}
	\includegraphics[width=6cm,height = 4cm]{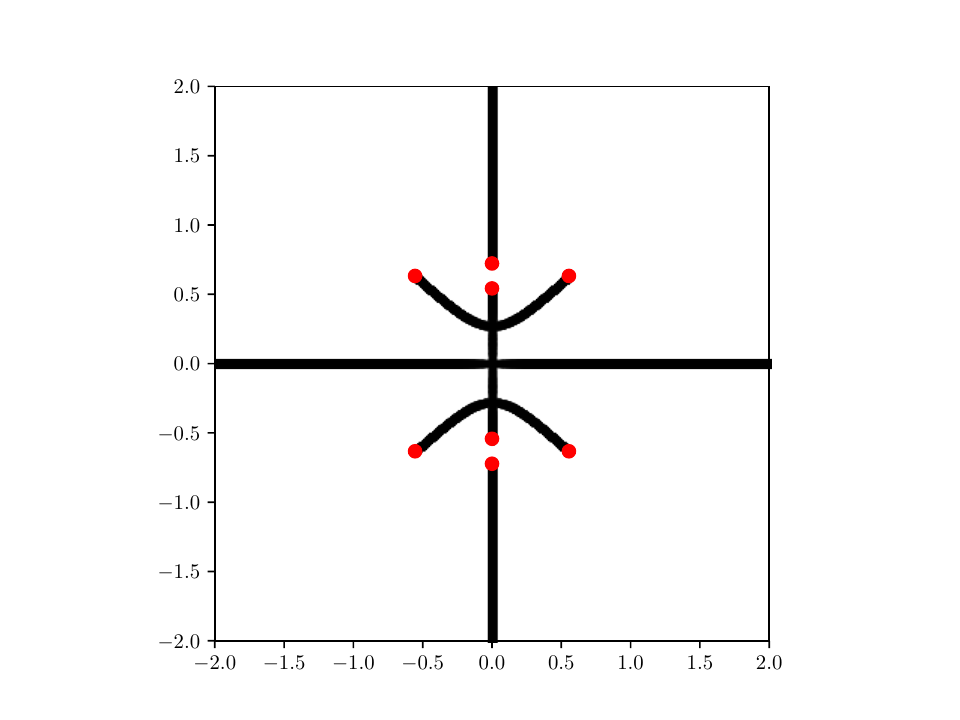}\hfil
	\includegraphics[width=6cm,height = 4cm]{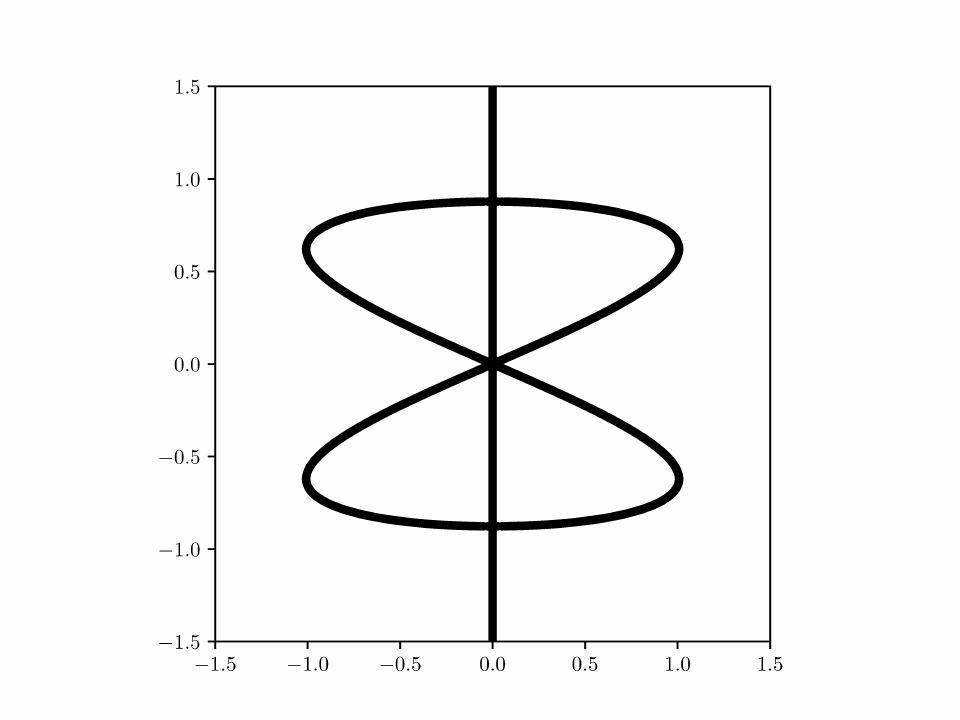}
	\caption{$u_1 = 3.2$, $u_2 = 0$, $u_3 = -0.6+0.2i$, $u_4 = -0.6-0.2i$.}
	\end{subfigure}
	\begin{subfigure}[b]{\textwidth}
	\includegraphics[width=6cm,height = 4cm]{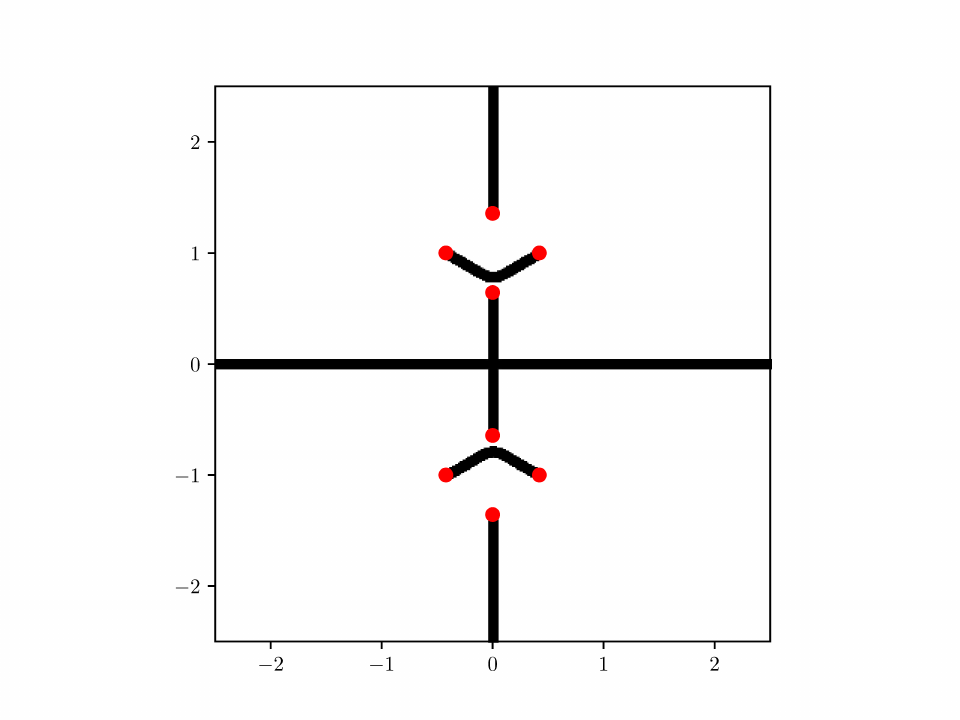}\hfil
	\includegraphics[width=5cm,height = 4cm]{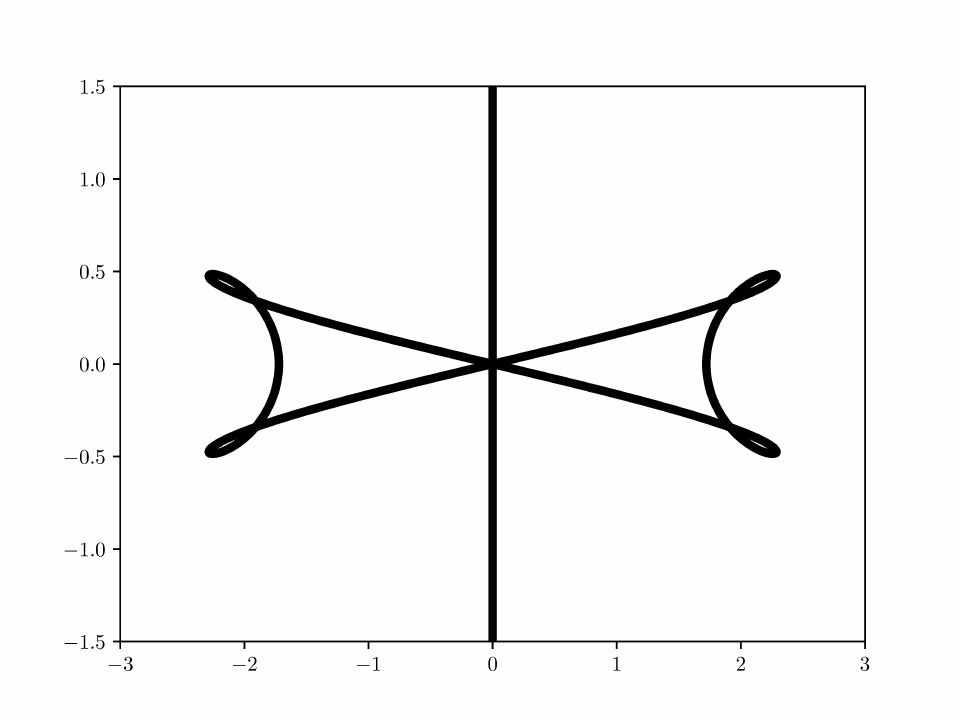}
	\caption{$u_1 = 8$, $u_2 = 0$, $u_3 = -0.1+0.6i$, $u_4 = -0.1-0.6i$.}
	\end{subfigure}
	\caption{The same as Figure \ref{figure1-4} but for $c^2<4b$ and $d \in (d_+, \infty)$.
	\label{figure9}}
\end{figure}

For $d \in (d_+,\infty)$, we find three typical configurations for the Floquet spectrum in the KN spectral problem (\ref{lax-1-intro}) shown in Figure \ref{figure9} (left). All three have two spectral gaps on the imaginary axis. The difference between the three configurations 
is as follows: (a) two spectral bands intersect the real axis, (b) two spectral bands intersect the imaginary axis in the inner spectral band on the imaginary axis, and (c) two spectral bands intersect the imaginary axis in the spectral gaps. The stability spectrum in (a) and (b) cases represent a single figure 8 shown on Fig. \ref{figure9} (right). The stability spectrum in (c) is a novel shape, which was not seen for the periodic standing waves in the NLS equation in \cite{DS}. The gap on the imaginary axis satisfies the stability condition (\ref{stab-1}). However, each periodic wave has a complex band connected to the origin and hence is modulationally unstable according to Definition \ref{def-stab}.

\subsection{Case: $c^2 > 4b$, $c < 0$, and $b> 0$}

If $d \in (d_-,0)$ (see the right panel of Fig. \ref{figF}), then the roots of $Q$ are all real and ordered as 
$$
u_4 < 0 = u_3 < u_2 < u_1.
$$
The exact analytical expression for the periodic wave solutions is given by  (\ref{3.14}) for $\rho$ in $[u_2,u_1]$ with the period $L = 2 K(k) \nu^{-1}$.
The roots of $P(\lambda)$ in (\ref{3.22}) form two quadruplets
of complex-conjugate eigenvalues in (\ref{configuration-1}) with
$\alpha_1 = \alpha_2$ in (\ref{3.27}). 
This case leads to similar figures as those in Figure \ref{figure1-4}. All periodic standing waves are modulationally unstable. 

When $d \to d_-$, we have $u_2 \to u_1$ and one quadruplet coalesces on the real axis. This corresponds to the constant-amplitude wave. When $d \to 0$, we have $u_4 \to u_3 = 0$ and both quadruplets coalesce on the imaginary axis. 
At this point, the sign-definite periodic solution is continued for $d \in (0,d_+)$ but another sign-indefinite periodic solution arises.

If $d \in (0,d_+)$, then the roots of $Q$ are  ordered as 
$$
u_4 = 0 < u_3 < u_2 < u_1.
$$
As is described above, two periodic solutions coexist: one sign-definite solution is given by (\ref{3.14}) for $\rho \in [u_2,u_1]$ and the other sign-indefinite solution is given by (\ref{3.14a}) for $\rho \in [0,u_3]$. Extracting the square root analytically yields the sign-indefinite solution in the exact form,
\begin{equation}\label{3.14-R-R}
R(x) =
\frac{\sqrt{2u_3} {\rm cn}(\nu x;k)}{\sqrt{1 - u_2^{-1} u_3 {\rm sn}^2 (\nu x;k)}}.
\end{equation}
The roots of $P(\lambda)$ in (\ref{3.22}) for both periodic solutions form four pairs of purely imaginary roots $\{ \pm i \beta_1,  \pm i \beta_2, \pm i \beta_3, \pm i \beta_4\}$ in (\ref{configuration-2}) and (\ref{3.27a}).

For $d \in (0,d_+)$ and for either sign-definite or for sign-indefinite solutions, we find only one typical configuration for the Floquet spectrum in the KN spectral problem (\ref{lax-1-intro}) shown on Figure \ref{figure11} (left). The Floquet spectrum consists of the real axis and the imaginary axis with four spectral gaps. Since there is no Floquet spectrum 
in the spectral gaps in (\ref{stab-2}), the periodic standing waves are spectrally stable by Propositions \ref{prop-stability-real} and \ref{prop-stability}. Indeed, the stability spectrum is on the imaginary axis 
shown on Figure \ref{figure11} (right).

\begin{figure}[htb!]
	\includegraphics[width=0.45\textwidth]{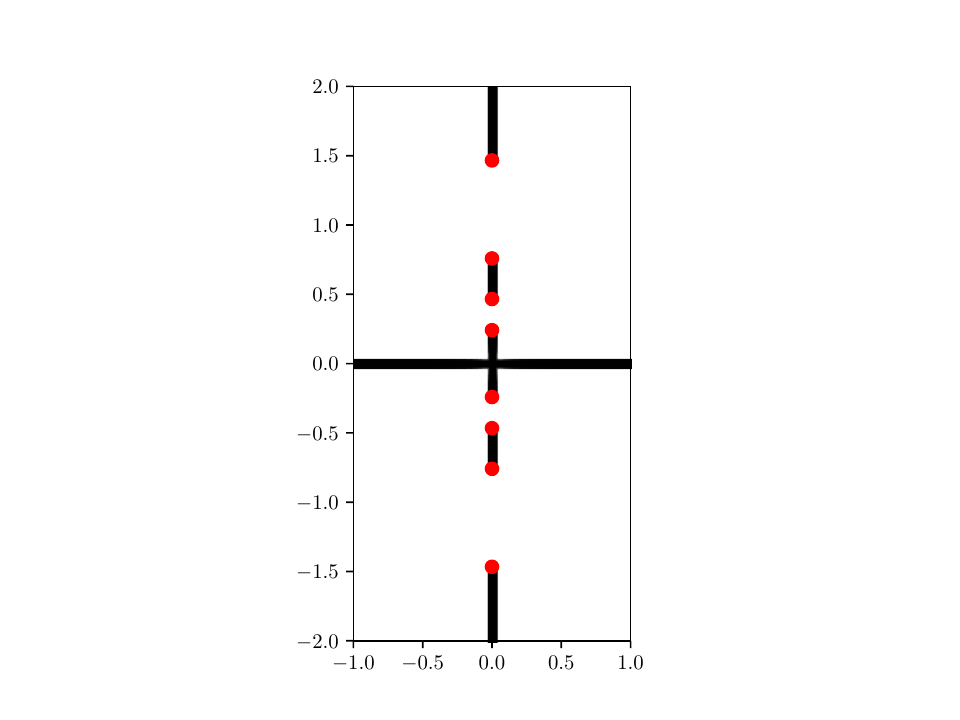}
	\includegraphics[width=0.45\textwidth]{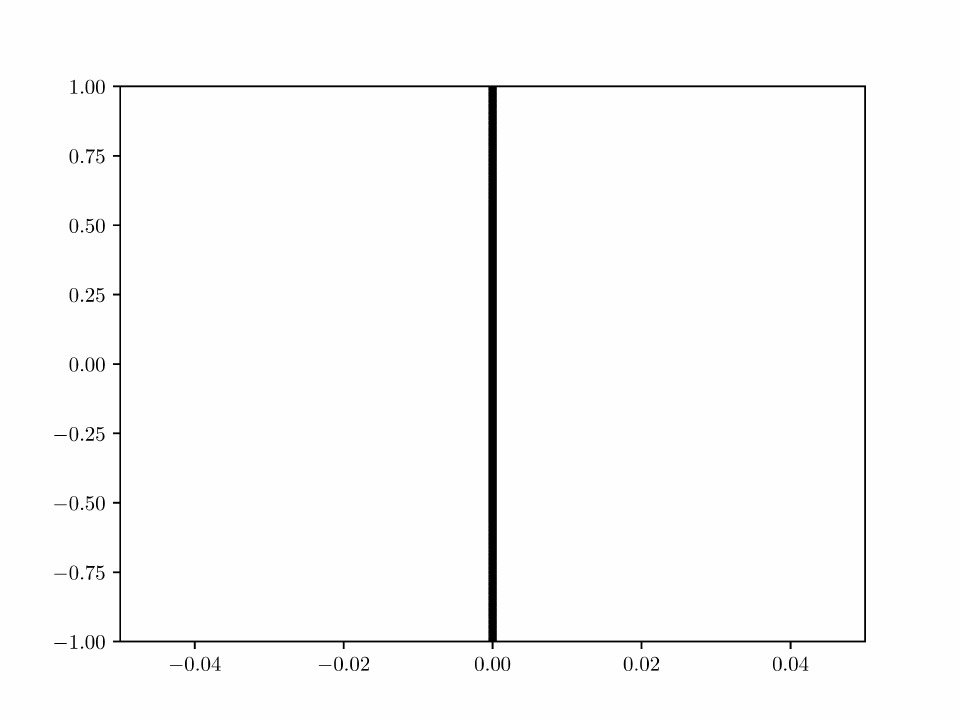}
	\caption{The same as Figure \ref{figure1-4} but for $c^2>4b$, $c<0$, $b>0$ and $d\in (0,d_+)$. The parameters are: $u_1 = 5$, $u_2 = 1$, $u_3 = 0.5$, $u_4 = 0$} \label{figure11}
\end{figure}

When $d \to d_+$, we have $u_3 \to u_2$ and $\beta_2 \to \beta_4$ so that middle spectral bands on the purely imaginary axis coalesce 
with the inner spectral bands, after which the spectral bands re-emerge in the complex plane transversely 
to the imaginary axis. These complex spectral bands intersect the imaginary axis in the inner spectral band on the imaginary axis as seen on Fig. \ref{figure9} (middle).

If $d \in (d_+,\infty)$, the roots of $Q$ can be re-enumerated and ordered as $u_2 = 0  < u_1$ with $u_3 = \bar{u}_4 = \gamma + i \eta$ being complex-conjugate. The exact solution is given by (\ref{3.20-R}). 
The roots of $P(\lambda)$ in (\ref{3.22}) corresponds to one quadruplet of complex eigenvalues $(\pm \lambda_1,\pm \bar{\lambda}_1)$ and two pairs of purely imaginary eigenvalues $\{ \pm i \beta_3, \pm i \beta_4\}$ in (\ref{configuration-3}) and (\ref{a3.16}). This case leads to similar figures as those in Figure \ref{figure9}. All periodic standing waves are modulationally unstable. 

Outcomes of the Floquet and stability spectra for the periodic standing waves in the case $a = 0$ are summarized as follows. 

\vspace{0.2cm}

\centerline{\fbox{\parbox[cs]{1,0\textwidth}{
The only difference between the cases $c^2 < 4b$ and $c^2 > 4b$, $c < 0$, $b > 0$ appears in the narrow interval $d \in (0,d_+)$. For $c^2 < 4b$, there is only one sign-indefinite periodic wave for each $d \in (0,d_+)$ and it is modulationally unstable according to Figure \ref{figure5-8}. For $c^2 > 4b$, $c < 0$, $b > 0$, there are two periodic waves (one is sign-definite and the other one is sign-indefinite) for each $d \in (0,d_+)$; both are spectrally stable according to Figure \ref{figure11}. 
For $d \in (d_-,0)$ and $d \in (d_+,\infty)$, the periodic standing waves between the two cases are similar and the spectral pictures are given on Figures \ref{figure1-4} and \ref{figure9} respectively.}}}

\section{Conclusion}
\label{sec:5}

In this work we have developed the algebraic method of the nonlinearization of linear equations in the Lax pair in order to classify 
all periodic standing waves of the DNLS equation in terms of the location of eight complex eigenvalues of the KN spectral problem. With the assistance of the numerical Hill's method, we have computed the location of the Floquet spectrum in the KN spectral problem. This allowed us to conclude that the periodic standing waves with all eight eigenvalues on the imaginary axis were spectrally (and modulationally) stable, whereas all other periodic standing waves were modulationally unstable. 

We showed these results for the periodic standing waves in the particular case $a = 0$. However, since the eight roots of the polynomial $P(\lambda)$ in (\ref{polynomial}) have similar location for the periodic standing waves in the general case $a \neq 0$, we expect that the same stability conclusions hold for $a \neq 0$. 

A number of new directions in the context of the DNLS equation are opened following this work. Even if the periodic standing waves are modulationally unstable, they can be orbitally stable with respect to periodic perturbations of the same or multiple period, as was explored for the NLS equation in \cite{DU1}. Nonlinear stability analysis for the DNLS equation is an open problem, whereas some results in this direction for the perturbations of the same period were found in \cite{Hakaev}.

Another interesting problem is to locate the Floquet spectrum of the KN spectral problem in the complex $\lambda$ plane analytically. For example, we do not have the analytical proof that the Floquet spectrum on the imaginary axis always have gaps between the spectral gaps as in 
(\ref{stab-1}) and (\ref{stab-2}). There is no proof that there are no other spectral bands of the Floquet spectrum in the open quadrants of the complex plane in addition to those connecting the eight eigenvalues of the algebraic method. Such a proof for the NLS equation was carried out in \cite{DU1}, which may be the starting point for a similar proof for the DNLS equation.

It is also interesting that in the case $d\in (d_+,\infty)$, we find that the complex spectral band in the Floquet spectrum can only intersect the imaginary axis inside the spectral gap or in the interior spectral band but not in the exterior spectral bands. It would be interesting to see why this is the case analytically. 

In summary, analysis of the KN spectral problem for the periodic standing waves in the DNLS equation is open for further study.

\vspace{0.25cm}

{\bf Acknowledgements.} This work was supported in part by the National
Natural Science Foundation of China (No. 11971103).

\end{document}